\DeclareMathOperator{\tr}{trace}
\DeclareMathOperator{\diag}{diag}
\DeclareMathOperator{\nbd}{nbd}
\DeclareMathOperator{\supp}{supp}
\newtheorem{theorem}{Theorem}
\newtheorem{corollary}{Corollary}
\newtheorem{lemma}{Lemma}
\newtheorem{example}{Example}
\newtheorem{definition}{Definition}
\title{Condition for zero and non-zero discord in graph Laplacian quantum states}
\author{
	Supriyo Dutta \\ Department of Mathematics\\ Indian Institute of Technology Jodhpur.\\ Email: \texttt{dosupriyo@gmail.com}\vspace{.25cm}\\
	Bibhas Adhikari\\ Department of Mathematics\\ Indian Institute of Technology Kharagpur.\\Email: \texttt{bibhas@maths.iitkgp.ernet.in} \vspace{.25cm}\\
	Subhashish Banerjee\\ Department of Physics\\ Indian Institute of Technology Jodhpur.\\ Email: \texttt{subhashish@iitj.ac.in}
}
\date{}
\begin{document}

	\maketitle
	
	\begin{abstract}
		This work is at the interface of graph theory and quantum mechanics. Quantum correlations epitomize the usefulness of quantum mechanics. Quantum discord is an interesting facet of bipartite quantum correlations. Earlier, it was shown that every combinatorial graph corresponds to quantum states whose characteristics are reflected in the structure of the underlined graph. A number of combinatorial relations between quantum discord and simple graphs were studied. To extend the scope of these studies, we need to generalise the earlier concepts applicable to simple graphs to weighted graphs, corresponding to a diverse class of quantum states. To this effect, we determine the class of quantum states whose density matrix representation can be derived from graph Laplacian matrices associated with a weighted directed graph and call them graph Laplacian quantum states. We find the graph theoretic conditions for zero and non-zero  quantum discord for these states. We apply these results on some important pure two qubit states, as well as a number of mixed quantum states, such as the Werner, Isotropic, and $X$-states. We also consider graph Laplacian states corresponding to simple graphs as a special case.\vspace{.25cm}
		
		\noindent \textbf{Keywords:} Density matrix; Mixed state; Weighted Graph; Combinatorial and Signless Laplacian Matrices of a graph; Quantum discord.
	\end{abstract}

	\section{Introduction}
	
		A strong interest has been generated towards applying graphs and networks in different aspects of quantum mechanics and quantum information in recent times \cite{braunstein2006laplacian, hassan2007combinatorial, dutta2016graph, lockhart2016combinatorial, belhaj2016weighted, simmons2017symmetric, simmons2017quantum, belhaj2016multi, wu2010graphs, dutta2017quantum}. Another scenario where this could be envisaged would be the discretization of a Hamiltonian, within a tight binding model, such that the electron lives on the vertices of the graph and the dynamics induces their hopping from one vertex to another. Quantum states defined by using combinatorial Laplacian and signless Laplacian matrices associated with a weighted directed graph have recently been introduced and investigated in \cite{adhikari2017Laplacian}. Throughout the paper we call these quantum states as graph Laplacian states. In this paper we derive certain conditions on structure of weighted directed graphs such that the graph Laplacian quantum states have zero quantum discord.
		
		Recall that quantum discord $D(\rho)$ of a state $\rho$ is a class of quantum correlations which has been used as a resource in quantum information and communication \cite{zurek2000einselection, ollivier2001quantum, henderson2001classical, adhikari2012operational, pirandola2013quantum, brodutch2016should}. From the perspective of computational complexity, it is proved that calculating $D(\rho)$ is an NP-complete problem \cite{huang2014computing, lim2014sudden}. This calls for developing alternate measures and techniques to realize quantum discord \cite{dakic2010necessary}. In our earlier work \cite{dutta2017quantum}, we have constructed a number of criterion for zero discord in graph Laplacian quantum states corresponding to simple graph. Also, we have produced a graph theoretic measure of quantum discord. Simple graphs do not possess directed, weighted edges and loops. As a result we can express a limited number of useful quantum states as a graph Laplacian quantum state. Therefore, for wider applicability, we need to generalise our earlier constructions for weighted graphs. This work is related to combinatorial graphs and their corresponding quantum states. We find out conditions on graphs such that the corresponding quantum states have non-zero discord. These conditions shed light into the nature of discord in a number of important quantum states. Hence just by observing the structural properties of the graph, the quantum discord can be determined. These properties include existence or non-existence of some particular edges, and degree of vertices. Therefore, this work develops a new method of visualization to the problem of discord by exploiting the connection between graph theory and quantum mechanics.
		
		Consider a bipartite system of order $m\times n.$ Then the density matrix corresponding to such a bipartite system is of order $mn\times mn$ and it is a block matrix having each block of size $n$. Treating a graph as a clustered graph on $m\times n$ vertices in which each cluster contains $n$ vertices, the combinatorial Laplacian matrix and the signless Laplacian matrix (defined in Section 2)  define block density matrices corresponding to the graph. The arrangement of edges in the clustered graph determines the discord in the corresponding quantum states. This approach can be considered as a combinatorial approach to the realization of quantum discord. Interpreting the Werner states, isotropic states  and some of the $X$ states as graph Laplacian states, we illuminate their quantum discord in graph theoretic terms.
		
		The article is organized as follows. We provide a brief overview of graph theory which is required for the remaining part of the article. We establish the condition for a quantum state to be a graph Laplacian state. In Section 3, we derive a number of results to generate the graph theoretic criterion of discord. Finally we employ these results on some well known states, both pure and mixed, for example, two qubit graph Laplacian states, Werner, Isotropic, and $X$ states, as well as graph Laplacian states corresponding to arbitrary simple graphs. We then conclude.

	\section{Preliminaries}
	
		A weighted digraph is an ordered pair of sets denoted by $G = (V(G), E(G))$ where $V(G)$ is called the vertex set and $E(G)\subseteq V\times V$ is a set of ordered pair of vertices called the edge set with a weight function $w_G : E(G) \rightarrow \mathbb{C}\setminus \{0\}$ \cite{adhikari2017Laplacian}. If there is no confusion regarding the underlined digraph $G$, we simply denote $w$  for $w_G$. Now, we consider the following assumptions on all weighted digraphs considered in the article.\\		
		{\bf  Assumptions:}
		\begin{enumerate}
			\item 
			Given two vertices $i$ and $j$, if $(i,j)\in E(G)$ then $(j,i)\in E(G)$ and $w(j,i) = \overline{w(i,j)}$,  the complex conjugate of $w(i,j)$.
			\item
			If $(i,i) \in E(G)$ then $w(i,i)\in \mathbb{R}^+$, the set of non-negative real numbers.
		\end{enumerate}
		Note that if $w(i,j) = 1$ for all $(i,j) \in E(G)$ and $(i,i) \notin E(G)$ for all $i\in V(G)$ then the digraph $G$ becomes a simple graph. Let $G$ be a weighted digraph on $N$ vertices. Then the adjacency matrix $A(G) = [a_{ij}]_{N \times N}$ associated with a weighted digraph $G$ is defined by,
		\begin{equation}
		a_{ij} = \begin{cases} w(i,j) & ~\text{if}~ (i,j) \in E(G), \\ 
		0 & ~\text{if}~ (i,j) \notin E(G) \end{cases}
		\end{equation} and $a_{ji}=\overline{a_{ij}}=\overline{w(i,j)}.$ Thus $A(G)$ is a Hermitian matrix. The weighted degree of a vertex $i$ is defined by
		\begin{equation}
		d_{i} = \sum_{(i,j) \in E(G)} |w(i,j)| = \sum_{j = 1}^N |a_{i,j}|.
		\end{equation}
		The degree matrix is the diagonal matrix defined by $D(G) = \diag\{d_1, d_2, \dots d_N\}$. The Laplacian and the signless Laplacian matrices are defined by $L(G) = D(G) - A(G)$ and $Q(G) = D(G) + A(G)$, respectively. It is proved in \cite{adhikari2017Laplacian} that $L(G)$ and $Q(G)$ are positive semidefinite Hermitian matrices. Recall that a density matrix $\rho$ corresponding to a quantum state is a positive semi-definite Hermitian matrix with unit trace. Consequently the density matrices corresponding to a weighted digraph are defined as
		\begin{equation}
		\rho_l(G) = \frac{1}{\tr(L(G))} L(G) ~\text{and}~ \rho_q(G) = \frac{1}{\tr(Q(G))} Q(G).
		\end{equation}
		We denote $\rho_l(G)$ and $\rho_q(G)$ together with $\rho(G)$ when no confusion arises. We call the digraph $G$ as the graph representation of $\rho(G).$ Then we have the following lemma.
		
		\begin{lemma}\label{graphical_states}
			A density matrix $\rho = (\rho_{ij})_{N \times N}$ has a graph representation if and only if for all $i$ and $j$, $\rho_{ii} \ge \sum_{i \neq j} |\rho_{i,j}|$.
		\end{lemma}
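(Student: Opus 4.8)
The plan is to prove both implications by directly relating the entries of $\rho$ to the adjacency weights of a candidate digraph, using the defining relations $L(G) = D(G) - A(G)$ and $Q(G) = D(G) + A(G)$ together with $d_i = \sum_j |a_{ij}|$. I will treat the signless Laplacian representation $\rho = \rho_q(G)$ in detail; the Laplacian case $\rho = \rho_l(G)$ follows by flipping the signs of the off-diagonal weights and of the self-loop weight, so I would only remark on the changes needed.

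For the forward implication, suppose $\rho = \rho_q(G) = Q(G)/\tr(Q(G))$ for some weighted digraph $G$, and write $c = \tr(Q(G)) > 0$. The off-diagonal entries satisfy $\rho_{ij} = a_{ij}/c$, hence $|\rho_{ij}| = |a_{ij}|/c$, while the diagonal entries satisfy $\rho_{ii} = (d_i + a_{ii})/c$. Substituting $d_i = |a_{ii}| + \sum_{j \neq i}|a_{ij}|$, where $a_{ii}$ is real by Assumption 2, yields $c\,\rho_{ii} - c\sum_{j \neq i}|\rho_{ij}| = |a_{ii}| + a_{ii} \ge 0$, which is exactly the stated inequality $\rho_{ii} \ge \sum_{j \neq i}|\rho_{ij}|$. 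The same computation with $L(G)$ produces $|a_{ii}| - a_{ii} \ge 0$, so the forward direction is immediate in both representations.

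For the converse, given a density matrix $\rho$ with $\rho_{ii} \ge \sum_{j \neq i}|\rho_{ij}|$, I would construct $G$ explicitly so that $Q(G) = \rho$ (and therefore $\rho_q(G) = \rho$, since $\tr\rho = 1$). Set $a_{ij} = \rho_{ij}$ for $i \neq j$; Hermiticity of $\rho$ gives $a_{ji} = \overline{a_{ij}}$, so Assumption 1 holds and these fix the off-diagonal edges and weights. For the self-loops I would take $a_{ii} = \tfrac{1}{2}\bigl(\rho_{ii} - \sum_{j \neq i}|\rho_{ij}|\bigr)$, a real number, and verify that with this choice $d_i + a_{ii} = \rho_{ii}$, so that indeed $Q(G) = \rho$.

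The step I expect to be the genuine crux, and the reason diagonal dominance is exactly the right hypothesis, is the solvability of the self-loop equation $a_{ii} + |a_{ii}| = \rho_{ii} - \sum_{j\neq i}|\rho_{ij}|$. The left-hand side is a nonnegative function of the real unknown $a_{ii}$, so a solution exists if and only if the right-hand side is nonnegative, which is precisely diagonal dominance; the nonnegative branch $a_{ii} = \tfrac{1}{2}(\cdots)$ then furnishes a legitimate real weight. For the Laplacian realization the analogous equation is $a_{ii} - |a_{ii}| = \sum_{j\neq i}|\rho_{ij}| - \rho_{ii}$, whose right-hand side is nonpositive under the same hypothesis, again giving a valid (now nonpositive) self-loop weight. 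I would close by remarking that a Hermitian diagonally dominant matrix with unit trace is automatically positive semidefinite by Gershgorin's theorem, so no separate verification of the density-matrix conditions for the constructed $Q(G)$ is needed.
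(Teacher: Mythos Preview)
Your argument is correct and follows essentially the same approach as the paper: both analyses hinge on writing the diagonal entry as $\rho_{ii} = (d_i + s\,a_{ii})/c$ with $s=\pm 1$ and isolating the self-loop contribution $|a_{ii}| + s\,a_{ii}$. Your treatment is in fact more complete than the paper's own proof, which begins with ``If $\rho$ has a graph representation\ldots'' and never explicitly carries out the converse construction; the formula $|w(i,i)| = \tfrac{1}{2}\bigl(\rho_{ii} - \sum_{j\neq i}|\rho_{ij}|\bigr)$ appears in the paper only as a consequence of the forward direction, whereas you correctly recognise it as the recipe for building $G$ from a diagonally dominant $\rho$ and you verify that the resulting $Q(G)$ (resp.\ $L(G)$) reproduces $\rho$ exactly. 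Your explicit handling of the normalisation constant $c=\tr(Q(G))$ and the closing remark on Gershgorin are also improvements in rigour over the paper's presentation.
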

		
		\begin{proof}
			If $\rho$ has a graph representation, the weighted digraph $G$ has $N$ vertices since the order of $\rho$ is $N$. When $i \neq j$ and $\rho_{ij} \neq 0$ there is a directed edge $(i,j)$ with edge weight $w(i,j) = \rho_{ij}$. As $\rho$ is a positive semi-definite Hermitian matrix, $\rho_{ji} = \overline{\rho_{ij}}$ and $\rho_{ii}$ is a non-negative real number. Thus $(i,j)$ and $(j,i)$ exists together with $w(j,i) = \overline{w(i,j)}$.  Besides, $\rho_{ii} = d_i + s a_{ii}$. Here $s = -1$ for $\rho_l(G)$ and $s = 1$ for $\rho_q(G)$. Now,
			\begin{equation}
				\begin{split} 
					\rho_{ii} = d_i + s a_{ii} & = \sum_{j = 1}^N |w(i, j)| + s w(i,i) = \sum_{j \neq i} |w(i,j)| + |w(i,i)| + s w(i,i)\\
					& = \sum_{j \neq i} |\rho_{ij}| + |w(i,i)| + s w(i,i).
				\end{split} 
			\end{equation}
			As $\rho_{ii}$ is real, $w(i,i)$ must be real in the above expression. Now two cases arise. 
			\begin{itemize}
				\item 
				Case-I: Let $w(i,i) = 0$ or $|w(i,i)| = -s w(i,i)$. In any case, $\rho_{ii} = \sum_{j \neq i} |\rho_{i,j}|$.
				\item
				Case-II: Let $w(i,i) \neq 0$ and $|w(i,i)| + s w(i,i) = 2|w(i,i)|$. Then, from the above equation,
				\begin{equation}\label{graphical_2}
				|w(i,i)| = \frac{\rho_{ii} - \sum_{i \neq j} |\rho_{ij}|}{2} \ge 0. 
				\end{equation}
				In this case, $\rho_{ii} \ge \sum_{i \neq j} |\rho_{ij}|$.
			\end{itemize}
		\end{proof}
		
		Matrices with the property $\rho_{ii} \ge \sum_{i \neq j} |\rho_{ij}|$ are called diagonally dominant. Entanglement of the diagonally dominant density matrices were studied in \cite{wu2006conditions}. Now we state the following definition and examples.
		\begin{definition}\label{graphical_state_definition}
			{\bf Graph Laplacian quantum states:} A quantum state $\rho$ is said to be a graph Laplacian state  if there exists a weighted directed graph $G$ such that $\rho=\rho(G).$ 
		\end{definition}
		Note that any graph Laplacian state satisfies Lemma \ref{graphical_states} that is every graph Laplacian quantum state is represented by a diagonally dominant density matrix and the converse also holds.
		
		\begin{example}
			Consider the quantum state
			$$\rho = \frac{1}{5} \ket{0}\bra{0} + \frac{2}{5} \ket{0}\bra{1} + \frac{2}{5} \ket{1}\bra{0} + \frac{4}{5} \ket{1}\bra{1} = \frac{1}{5} \begin{bmatrix} 1 & 2 \\ 2 & 4 \end{bmatrix}.$$
			Note that, $\rho_{11} \le \rho_{12}$. Thus, this quantum state is not a graph Laplacian quantum state.
		\end{example}
		
		From now onwards we use the word graph for weighted directed graph. The following describes the framework for interpreting any graph $G$ on $mn$ vertices as a graph with $m$ clusters each with $n$ vertices. Recall that cluster of a graph is a subgraph of the graph. Consider a partition of the vertex set $V(G)$ that produces these clusters as follows.
		\begin{equation}\label{clustering}
		\begin{split}
		& V = C_1 \cup C_2 \cup \dots \cup C_m;\\
		& C_\mu \cap C_\nu = \emptyset ~\text{for}~ \mu \neq \nu ~\text{and}~ \mu, \nu = 1, 2, \dots m;\\
		& C_\mu = \{v_{\mu 1}, v_{\mu 2}, \dots v_{\mu n}\}.
		\end{split}
		\end{equation}
		For any vertex $v_{\gamma i}$, the Roman index $i$ represents the position of a vertex in $\gamma$-th cluster indexed by a Greek index. A density matrix $\rho$ acting on $\mathcal{H}^{(m)} \otimes \mathcal{H}^{(n)}$ has order $m \times n$, which corresponds to a graph with $m \times n$ vertices. Clearly dimension of $\mathcal{H}^{(n)}$ is the number of vertices in the cluster $C_\mu$. Also the number of clusters in the graph is the dimension of $\mathcal{H}^{(m)}$. Then, 
		\begin{equation}\label{adjacency}
		A(G) = \begin{bmatrix} A_{11} & A_{12} & \dots & A_{1m} \\ A_{21} & A_{22} & \dots & A_{2m} \\ \vdots & \vdots & \ddots & \vdots \\ A_{m1} & A_{m2} & \dots & A_{mm}\end{bmatrix}_{m \times m},
		\end{equation} 
		where $A_{\mu \nu}$ are blocks of order $n \times n$ \cite{dutta2016bipartite}. Note that, $A_{\mu \mu}$ contains the weights of the edges joining two vertices inside the cluster $C_\mu$. Also $A_{\mu\nu}$ contains the weights of the edges joining vertices of $C_\mu$ and $C_\nu$. As $A(G)$ is a Hermitian matrix, we have $A_{\mu \mu}^\dagger = A_{\mu \mu}$ and $A_{\mu \nu}^\dagger = A_{\nu \mu}$, for $\mu \neq \nu$. Now we recall the definition of induced subgraph \cite{west2001introduction}.
		
		\begin{definition}
			{\bf Induced Subdigraph:} A subdigraph $H$ of a digraph $G$ is an induced subdigraph if $u, v \in V(H)$ and $(u, v) \in E(G)$ implies $(u, v) \in E(H)$.
		\end{definition}
		
		We denote the induced subdigraph generated by the cluster $C_\mu$ by $\langle C_\mu \rangle$. Also the subdigraph $\langle C_\mu, C_\nu \rangle$ consists of all the vertices in $C_\mu \cup C_\nu$, and all the edges $(u, v)$ and $(v,u)$ with $u \in C_\mu$, and $v \in C_\nu$. In the equation (\ref{adjacency}), the block $A_{\mu \mu}$ acts as the adjacency matrix of the cluster $C_\mu$. Further, $A_{\mu \nu}$ represents all the edges joining two vertices belonging to different clusters $C_\mu$ and $C_\nu$. Thus, 
		\begin{equation}
		A(\langle C_\mu, C_\nu \rangle) = \begin{bmatrix} 0 & A_{\mu \nu} \\ A_{\nu \mu} & 0\end{bmatrix} = \begin{bmatrix} 0 & A_{\mu \nu} \\ A_{\mu \nu}^\dagger & 0\end{bmatrix}.
		\end{equation}
		This clustering on $V(G)$ also partitions the degree matrix into blocks, such that, $D = \diag\{D_1, D_2, \dots D_m\}$, where $D_\mu$ is a diagonal matrix containing degree of the vertices in $C_\mu$. If $B_{\mu \nu}$ are blocks of the density matrix $\rho(G)$, then
		\begin{equation}\label{block_and_graph}
		B_{\mu \nu} = \begin{cases}
		s\frac{A_{\mu \nu}}{d} & ~\text{if}~ \mu \neq \nu\\
		\frac{D_\mu + s A_{\mu \mu}}{d} & ~\text{if}~ \mu = \nu
		\end{cases},
		\end{equation}
		where $s = 1$ and $-1$ for $\rho(G) = \rho_q(G)$ and $\rho_l(G)$, respectively.
		
		A bipartite density matrix acts on the Hilbert space $\mathcal{H}^{(A)} \otimes \mathcal{H}^{(B)}$ where $\mathcal{H}^{(A)}$ and $\mathcal{H}^{(B)}$ denote the state spaces (Hilbert spaces) corresponding to the constituent systems $A$ and $B$ respectively,  $\otimes$ denotes the Kronecker (tensor) product. For any bipartite density matrix $\rho$, there are two reduced density matrices $\rho_a$ and $\rho_b$ acting on the spaces $\mathcal{H}^{(A)}$ and $\mathcal{H}^{(B)}$, respectively.  The mutual information in $\rho$ is defined as $I(\rho) = S(\rho_a) + S(\rho_b) - S(\rho)$ where $S(X) = -\tr(X\log(X))$ is the von-Neumann entropy of a state $X$. Let the set of all possible von Neumann measurements with respect to the system $\mathcal{H}^{(B)}$ be $\Pi^b$. Thus we obtain an another measure of mutual information in $\rho$ given by $I(\rho|\Pi^b) = S(\rho_a) - S(\rho|\Pi^b)$, where $S(\rho|\Pi^b) = \sum_kp_kS(\rho_k)$, and $\rho_k = \frac{1}{p_k}(I_a \otimes \Pi_k^b) \rho (I_a \otimes \Pi_k^b)$ with $p_k = \tr[(I_a \otimes \Pi_k^b) \rho (I_a \otimes \Pi_k^b)]$, $k = 1, 2, \dots \dim(\mathcal{H}^b)$. Quantum discord is the difference between these two classically equivalent measures $I(\rho)$ and $I(\rho|\Pi^b)$ \cite{henderson2001classical, ollivier2001quantum}. There are quantum states which yield equal value for both the measures  that are known as classical-quantum states \cite{kus2009classical} or pointer states.
		
		\begin{definition}{\bf Quantum discord:} 
			The quantum discord of a bipartite state $\rho$ is
			\begin{equation}
			D(\rho) = \min_{\Pi^b}\{I(\rho) - I(\rho|\Pi^b)\}.
			\end{equation}
		\end{definition}

	\section{Graph theoretic criterion for quantum discord}
	
		Before going to the graph theoretic aspects of quantum discord, below we discuss a number of definitions and lemmas which will be used for further derivations. Given a vertex $i$ we call the set $\nbd_G(i) = \{j: j \in V(G), (i,j) \in E(G)\}$ as the neighborhood of vertex $i$. Under the basic assumptions outlined above, $(i,j)$ and $(j,i)$ belong to $E(G)$ together. With respect to the vertex $i$ we describe $(i,j)$ as the outgoing edge and $(j,i)$ as the incoming edge. We collect the weights of the edges incident to vertex $i$ in the following sets.
		\begin{equation}
		\begin{split} 
		W(\nbd_G(i)_{out}) & = \{w_G(i,j):  (i,j) \in E(G)\},\\
		W(\nbd_G(i)_{in}) & = \{w_G(j,i): (j,i) \in E(G)\}.
		\end{split}
		\end{equation} 
		
		\begin{definition}\label{support_of_a_vector}
			{\bf Support of a vector:} Given a vector $a \in \mathbb{C}^n$ there is a set $\supp(a)$ defined by,
			$$\supp(a) = \{i: a(i) \neq 0\}$$ where $a(i)$ denotes the $i$th entry of $a.$
		\end{definition} 
		Given two vectors $a, b \in \mathbb{C}^n$ we define their product as,
		\begin{equation}\label{inner_product}
		\braket{a|b} = \sum_{ k \in \supp(a) \cap \supp(b)} a(k) b(k).
		\end{equation}
		This should not be confused with the inner product between two state vectors. Given a matrix $A = (a_{ij})_{n \times n}$, $a_{i*}$ and $a_{*j}$ denotes the $i$-th row and $j$-th column vectors, respectively. Corresponding to every $A$, there is a weighted bipartite graph of order $2n$, $\mathcal{A} = (V(\mathcal{A}), E(\mathcal{A}))$ with the adjacency matrix,
		\begin{equation}\label{matrix_to_bipartite}
		A(\mathcal{A}) = \begin{bmatrix} 0 & A \\ A^\dagger & 0 \end{bmatrix}.
		\end{equation}
		As $\mathcal{A}$ is a bipartite graph we can write $V(\mathcal{A}) = C_\mu \cup C_\nu$, where $C_\mu = \{v_{\mu 1}, v_{\mu 2}, \dots v_{\mu n}\}$,	$C_\nu = \{v_{\nu 1}, v_{\nu 2}, \dots v_{\nu n}\}$ and $C_\mu \cap C_\nu = \emptyset$ as mentioned in equation (\ref{clustering}). Therefore, $\mathcal{A} = \langle C_{\mu}, C_\nu \rangle$, the subgraph generated by the vertex sets $C_\mu$ and $C_\nu$. The directed edge $(v_{\mu i}, v_{\nu j}) \in E(\mathcal{A})$, if and only if $a_{ij} \neq 0$. Also, $w(v_{\mu i}, v_{\nu j}) = a_{ij}$. Moreover, the adjacency matrix $A(\mathcal{A})$ indicates the existence of $(v_{\nu j}, v_{\mu i})$ with $w(v_{\nu j}, v_{\mu i}) = \overline{a_{ij}}$. Now, 
		\begin{equation}
		\nbd_{\mathcal{A}}(v_{\mu i}) = \{v_{\nu j}: (v_{\mu i}, v_{\nu j}) \in E(G)\}  \subset C_\nu.
		\end{equation}  
		Similarly, $\nbd_{\mathcal{A}}(v_{\nu i}) \subset C_\mu$. Let $0_{1,n}$ and $0_{n,1}$ are zero row and column vectors. Note that, the $i$-th row of $A(\mathcal{A})$, that is $(0_{1,n}, a_{i*})$ represents weights of outgoing edges from the vertex $v_{\mu i}$. According to the definition \ref{support_of_a_vector}, $\supp(0_{1,n}, a_{i*}) = \supp(a_{i*})$ which represents indexes of vertices in $\nbd_{\mathcal{A}}(v_{\mu i})$. Thus we have,
		\begin{equation}
		\supp(a_{i*}) = \nbd_{\mathcal{A}}(v_{\mu i}), ~\text{and}~ a_{i*} = W(\nbd_{\mathcal{A}}(v_{\mu i})_{out}).
		\end{equation}
		Similarly, the $(n + i)$-th column of $A(\mathcal{A})$, that is $(a_{*i}, 0_{n,1})$ represents edge weights of the incoming edges to the vertex $v_{\nu i}$. Also, $\supp(a_{*i})$ represents indexes of vertices in $\nbd_{\mathcal{A}}(v_{\nu i})$. Hence,
		\begin{equation}
		\supp(a_{*i}) = \nbd_{\mathcal{A}}(v_{\nu i}) ~\text{and}~ a_{*i} = W(\nbd_{\mathcal{A}}(v_{\nu i})_{in}).
		\end{equation}
		
		In particular, any complex Hermitian matrix $A$  of order $n$ can be considered as an adjacency matrix of a graph \~{A}, where $V(\text{\~{A}}) = C_\mu = \{v_{\mu,1}, v_{\mu,2}, \dots v_{\mu,n}\}$. The edge $(v_{\mu,i}, v_{\mu,j}) \in E(\text{\~{A}})$ if and only if $a_{ij} \neq 0$. Thus, \~{A} $ = \langle C_\mu \rangle$, the induced subgraph generated by the vertex set $C_\mu$. Here, the row vector $a_{i*}$ represents all outgoing edges from the vertex $v_{\mu i}$. Thus, $\supp(a_{i*}) = \nbd_{\text{\~{A}}}(v_{\mu i})$. Similarly, $\supp(a_{*i}) = \nbd_{\text{\~{A}}}(v_{i \mu})$. Now, we have the following results. Interested readers may go through their proofs in the Appendix.
		
		\begin{lemma}\label{commutativity}
			Let the weighted bipartite digraphs corresponding to complex square matrices $A$ and $B$ of order $n$ be $\mathcal{A} = \langle C_\mu, C_\nu \rangle$, and $\mathcal{B} = \langle C_\alpha, C_\beta \rangle$, respectively. The matrices $A$ and $B$ commute, if and only if for all $i, j$ with $1 \le i,j \le n$,
			$$\sum_{k \in \nbd(v_{\mu i}) \cap \nbd(v_{\beta j})} w(v_{\mu i}, v_{\nu k}) w(v_{\alpha k}, v_{\beta j}) = \sum_{k \in \nbd(v_{\alpha i}) \cap \nbd(v_{\nu j})} w(v_{\alpha i}, v_{\beta k}) w(v_{\mu k}, v_{\nu j}).$$
		\end{lemma}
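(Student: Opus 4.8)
The plan is to recognize that the two sides of the claimed identity are nothing but the $(i,j)$ entries of the matrix products $AB$ and $BA$, rewritten in the graph-theoretic language set up immediately before the statement; once this translation is made, equivalence with $AB = BA$ is immediate.

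First I would rewrite each factor appearing in the two sums using the dictionary between edge weights and matrix entries. For the bipartite digraph $\mathcal{A} = \langle C_\mu, C_\nu\rangle$ one has $w(v_{\mu i}, v_{\nu k}) = a_{ik}$, and the neighborhood index set $\nbd(v_{\mu i})$ coincides with $\nbd(a_{i*})$, the support of the $i$-th row of $A$. For $\mathcal{B} = \langle C_\alpha, C_\beta\rangle$ one has $w(v_{\alpha k}, v_{\beta j}) = b_{kj}$, and $\nbd(v_{\beta j}) = \nbd(b_{*j})$, the support of the $j$-th column of $B$. Consequently the left-hand sum runs over $k \in \nbd(a_{i*}) \cap \nbd(b_{*j})$ with summand $a_{ik} b_{kj}$, which by the product (\ref{inner_product}) is exactly $\langle a_{i*}, b_{*j}\rangle$. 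An analogous reading of the right-hand side, using $w(v_{\alpha i}, v_{\beta k}) = b_{ik}$, $w(v_{\mu k}, v_{\nu j}) = a_{kj}$, $\nbd(v_{\alpha i}) = \nbd(b_{i*})$ and $\nbd(v_{\nu j}) = \nbd(a_{*j})$, identifies it with $\langle b_{i*}, a_{*j}\rangle$. Here I would also note that the product (\ref{inner_product}) carries no conjugation, which is precisely what lets it reproduce ordinary matrix multiplication rather than a sesquilinear pairing.

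The second step is to remove the support restriction. For any index $k$ lying outside $\nbd(a_{i*}) \cap \nbd(b_{*j})$ we have $a_{ik} = 0$ or $b_{kj} = 0$, so $a_{ik} b_{kj}$ vanishes and contributes nothing; hence $\langle a_{i*}, b_{*j}\rangle = \sum_{k=1}^n a_{ik} b_{kj} = (AB)_{ij}$, and likewise $\langle b_{i*}, a_{*j}\rangle = (BA)_{ij}$. Combining the two steps shows that the displayed identity, for a fixed pair $(i,j)$, is precisely the scalar equation $(AB)_{ij} = (BA)_{ij}$.

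Finally I would quantify over all indices: the identity holds for every $1 \le i, j \le n$ if and only if $AB$ and $BA$ agree entrywise, i.e. if and only if $AB = BA$, which is the assertion. I do not expect a genuine obstacle, since the content is a faithful translation of matrix multiplication into the neighborhood/weight formalism. The only point demanding care is the bookkeeping of the Greek and Roman indices: one must verify that on each side the summation variable $k$ ranges over the correct cluster, so that the two inner factors really compose a row-times-column product rather than a mismatched pairing. Confirming this index alignment is the crux, after which the result follows.
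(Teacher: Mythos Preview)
Your proposal is correct and follows essentially the same approach as the paper: both arguments use the dictionary $a_{ik}=w(v_{\mu i},v_{\nu k})$, $b_{kj}=w(v_{\alpha k},v_{\beta j})$ together with the support-restricted product \eqref{inner_product} to identify the two displayed sums with $(AB)_{ij}$ and $(BA)_{ij}$, after which the equivalence with $AB=BA$ is immediate. Your write-up is in fact a bit more explicit about the index bookkeeping and the harmless removal of the support restriction, but there is no substantive difference in method.
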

	
		Note that, if $\langle C_\mu, C_\nu \rangle = \langle C_\alpha, C_\beta \rangle$ then the condition of commutativity holds. Here, equality between two graphs indicates that they have equal vertex sets, edge sets, and vertex labellings. Also, if any of the graphs be empty, then the commutativity condition holds trivially.
		
		\begin{corollary}\label{commutativity_1}
			Let \~{A} $= \langle C_\mu \rangle$, and $\mathcal{B} = \langle C_\alpha, C_\beta \rangle$ be graphs corresponding to a Hermitian matrix $A = (a_{ij})_{n \times n}$, and square matrix $B = (b_{ij})_{n \times n}$. They commute if and only if for all $i, j$ with $1 \le i,j \le n$,
			$$\sum_{k \in \nbd(v_{\mu i}) \cap \nbd(v_{\beta j})} w(v_{\mu i}, v_{\mu k}) w(v_{\alpha k}, v_{\beta j}) = \sum_{k \in \nbd(v_{\alpha i}) \cap \nbd(v_{\mu j})} w(v_{\alpha i}, v_{\beta k}) w(v_{\mu k}, v_{\mu j}).$$
		\end{corollary}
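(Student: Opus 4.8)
The plan is to obtain this statement as the direct Hermitian specialization of Lemma~\ref{commutativity}. The key observation is that nothing in the algebraic heart of that lemma uses the shape of the graph of $A$: the identities $(AB)_{ij} = \langle a_{i*}, b_{*j}\rangle$ and $(BA)_{ij} = \langle b_{i*}, a_{*j}\rangle$, together with the equivalence $AB=BA \iff (AB)_{ij}=(BA)_{ij}$ for all $i,j$, hold for any pair of $n\times n$ matrices and were already established there via equation~(\ref{inner_product}). Only the graph-theoretic labelling of the entries of $A$ changes when $A$ is taken to be Hermitian.

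So first I would recall, from the paragraph preceding Lemma~\ref{commutativity}, that a Hermitian $A$ is encoded not by a bipartite graph $\langle C_\mu, C_\nu\rangle$ but by the induced subdigraph $\tilde{A}=\langle C_\mu\rangle$ supported on the single cluster $C_\mu$, with $a_{ik}=w(v_{\mu i}, v_{\mu k})$, and with $\nbd(a_{i*})=\nbd_{\tilde{A}}(v_{\mu i})$ and $\nbd(a_{*j})=\nbd_{\tilde{A}}(v_{\mu j})$. Concretely this is exactly the $\nu=\mu$ instance of the bipartite representation used in the lemma: the column cluster of $A$ is identified with its row cluster, which is legitimate precisely because $A$ is Hermitian.

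Next I would substitute $\nu\mapsto\mu$ directly into the commutativity criterion of Lemma~\ref{commutativity}, leaving the factors coming from $B$ (whose graph is still the genuine bipartite $\langle C_\alpha, C_\beta\rangle$) untouched. Under this substitution each weight $w(v_{\mu i}, v_{\nu k})$ becomes $w(v_{\mu i}, v_{\mu k})$, each $w(v_{\mu k}, v_{\nu j})$ becomes $w(v_{\mu k}, v_{\mu j})$, and the index set $\nbd(v_{\nu j})$ becomes $\nbd(v_{\mu j})$. The left-hand sum then runs over $k\in\nbd(v_{\mu i})\cap\nbd(v_{\beta j})$ with summand $w(v_{\mu i}, v_{\mu k})\,w(v_{\alpha k}, v_{\beta j})$, and the right-hand sum over $k\in\nbd(v_{\alpha i})\cap\nbd(v_{\mu j})$ with summand $w(v_{\alpha i}, v_{\beta k})\,w(v_{\mu k}, v_{\mu j})$ --- precisely the asserted equality.

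The only point requiring care, and the place I would be most careful to get right, is the bookkeeping of the support index sets: I must check that $\nbd(v_{\mu i})$ and $\nbd(v_{\mu j})$ indeed name the supports of the $i$-th row and $j$-th column of the Hermitian matrix $A$, so that the two index sets $\nbd(v_{\mu i})\cap\nbd(v_{\beta j})$ and $\nbd(v_{\alpha i})\cap\nbd(v_{\mu j})$ over which the sums range capture exactly the nonvanishing terms of $(AB)_{ij}$ and $(BA)_{ij}$ respectively. Because the matrix entries themselves are unaltered by the relabelling, there is no genuine computation left; the substance of the corollary is purely the passage from the two-cluster (bipartite) picture of $A$ to its one-cluster (induced-subgraph) picture.
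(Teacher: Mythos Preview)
Your proposal is correct and follows essentially the same route as the paper: both specialize Lemma~\ref{commutativity} by collapsing the bipartite representation $\langle C_\mu, C_\nu\rangle$ of $A$ to the single-cluster induced subgraph $\langle C_\mu\rangle$, using the Hermitian identification $\nbd(a_{i*})=\nbd(a_{*i})=\nbd_{\tilde A}(v_{\mu i})$. Your presentation via the direct substitution $\nu\mapsto\mu$ is arguably cleaner than the paper's, which phrases the same step as ``applying the symmetry of $A$'' to rewrite $\langle b_{i*}, a_{*j}\rangle$.
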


		\begin{corollary}\label{commutativity_2}
			Two Hermitian matrices $A = (a_{ij})_{n \times n}$, and $B = (b_{ij})_{n \times n}$ corresponding to graphs \~{A} $= \langle C_\mu \rangle$, and \~{B} $ = \langle C_\nu \rangle$ commute, if and only if for every $i, j$ with $1 \le i, j \le n$,
			$$\sum_{k \in \nbd(v_{\mu i}) \cap \nbd(v_{\nu j})} w(v_{\mu i}, v_{\mu k}) w(v_{\nu k}, v_{\nu j}) = \sum_{k \in \nbd(v_{\nu i}) \cap \nbd(v_{\mu j})} w(v_{\nu i}, v_{\nu k}) w(v_{\mu k}, v_{\mu j}).$$
		\end{corollary}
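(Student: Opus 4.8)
The plan is to mirror the proof of Corollary \ref{commutativity_1}, now exploiting the Hermiticity of \emph{both} matrices rather than just one. The starting point is the elementary fact that $AB = BA$ if and only if $(AB)_{ij} = (BA)_{ij}$ for every pair $i, j$ with $1 \le i, j \le n$. So I would reduce the entire statement to an entrywise comparison of the two products and then translate each entry into the graph language already established in the excerpt.

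First I would record the dictionary for the two single-cluster induced subgraphs. Since $A$ is Hermitian it is the adjacency matrix of $\tilde{A} = \langle C_\mu \rangle$ with $a_{ik} = w(v_{\mu i}, v_{\mu k})$, and the support identities $\nbd(a_{i*}) = \nbd(a_{*i}) = \nbd_{\tilde{A}}(v_{\mu i})$ hold because the $i$-th row and $i$-th column of a Hermitian matrix share the same support. The same remarks apply to $B$ with $\tilde{B} = \langle C_\nu \rangle$ and $b_{kj} = w(v_{\nu k}, v_{\nu j})$.

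Next I would evaluate the two products using the product of vectors defined in equation (\ref{inner_product}). Writing $(AB)_{ij} = \sum_k a_{ik} b_{kj} = \langle a_{i*}, b_{*j}\rangle$, the summation index $k$ contributes precisely when $a_{ik} \neq 0$ and $b_{kj} \neq 0$, i.e. when $k \in \nbd(a_{i*}) \cap \nbd(b_{*j})$. By the dictionary above this is $k \in \nbd(v_{\mu i}) \cap \nbd(v_{\nu j})$, and substituting the edge weights yields exactly the left-hand side of the claimed identity. Symmetrically, $(BA)_{ij} = \langle b_{i*}, a_{*j}\rangle$ produces the right-hand side with index set $k \in \nbd(v_{\nu i}) \cap \nbd(v_{\mu j})$. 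Equating the two expressions for all $i, j$ delivers the asserted equivalence. I would also note that the same identity drops out of Corollary \ref{commutativity_1} by forcing $B$ to be Hermitian, which collapses its bipartite graph $\langle C_\alpha, C_\beta \rangle$ onto the single-cluster subgraph $\langle C_\nu \rangle$ (formally $\alpha = \beta = \nu$); feeding this into the two sums of Corollary \ref{commutativity_1} reproduces the displayed formula verbatim.

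The only place demanding care — and the sole genuine obstacle — is the support bookkeeping: I must justify that the $j$-th \emph{column} of $B$ has the same neighborhood structure as the $j$-th vertex of $\tilde{B}$, namely $\nbd(b_{*j}) = \nbd_{\tilde{B}}(v_{\nu j})$. This is exactly the point where Hermiticity of $B$ (and not merely of $A$, as in Corollary \ref{commutativity_1}) is invoked, and it is what distinguishes the present statement from its predecessor. Everything else is a routine rewriting of matrix entries as edge weights.
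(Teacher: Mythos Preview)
Your proposal is correct and in fact contains the paper's own argument: the paper's entire proof is the one-line observation that the result follows from Corollary~\ref{commutativity_1} by setting $\alpha = \beta = \nu$, which you state explicitly in your final paragraph. Your additional direct entrywise computation is sound but more than the paper requires.
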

		
		A complex normal matrix $A$ commutes with its conjugate transpose, that is $AA^\dagger = A^\dagger A$. Hermitian matrices are trivially normal matrices. But there are normal matrices which are not Hermitian.
		\begin{lemma}\label{normality}
			Let $\mathcal{A} = \langle C_\mu, C_{\nu} \rangle$ be a weighted bipartite digraph corresponding to a matrix $A = (a_{ij})_{n \times n}$. It is normal, if and only if for every $i$, and $j$ with $1 \le i, j \le n$,
			$$\sum_{k \in \nbd(v_{\mu i}) \cap \nbd(v_{\mu j})} w(v_{\mu i}, v_{\nu k}) w(v_{\nu k}, v_{\mu j}) = \sum_{k \in \nbd(v_{\nu i}) \cap \nbd(v_{\nu j})} w(v_{\nu i}, v_{\mu k}) w(v_{\mu k}, v_{\nu j}).$$
		\end{lemma}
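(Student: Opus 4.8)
The plan is to reduce the normality condition $AA^\dagger = A^\dagger A$ to its entrywise form and then translate each entry into the neighbourhood language exactly as was done in the proof of Lemma \ref{commutativity}. The matrix $A$ is normal if and only if $(AA^\dagger)_{ij} = (A^\dagger A)_{ij}$ for all $i,j$ with $1 \le i,j \le n$, so it suffices to rewrite both sides as sums of edge-weight products over the appropriate intersections of neighbourhoods and match them with the two sides of the claimed identity.

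First I would handle the left-hand side. Writing the $(i,j)$ entry as a product in the sense of equation (\ref{inner_product}), we have $(AA^\dagger)_{ij} = \langle a_{i*}, (A^\dagger)_{*j} \rangle$, where the $j$-th column of $A^\dagger$ equals $\overline{a_{j*}}$, the conjugate of the $j$-th row of $A$. Using $a_{ik} = w(v_{\mu i}, v_{\nu k})$ together with the basic assumption $\overline{a_{jk}} = \overline{w(v_{\mu j}, v_{\nu k})} = w(v_{\nu k}, v_{\mu j})$ recorded around equation (\ref{matrix_to_bipartite}), the product becomes $\sum_k w(v_{\mu i}, v_{\nu k}) w(v_{\nu k}, v_{\mu j})$. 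Since $\nbd(a_{i*}) = \nbd(v_{\mu i})$ and conjugation preserves the zero pattern so that $\nbd(\overline{a_{j*}}) = \nbd(v_{\mu j})$, the product in (\ref{inner_product}) restricts $k$ to $\nbd(v_{\mu i}) \cap \nbd(v_{\mu j})$, which is precisely the left-hand side of the stated equation.

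Next I would treat the right-hand side in the same manner. Here $(A^\dagger A)_{ij} = \langle (A^\dagger)_{i*}, a_{*j} \rangle$, where the $i$-th row of $A^\dagger$ equals $\overline{a_{*i}}$, the conjugate of the $i$-th column of $A$. Then $\overline{a_{ki}} = w(v_{\nu i}, v_{\mu k})$ and $a_{kj} = w(v_{\mu k}, v_{\nu j})$, so the entry is $\sum_k w(v_{\nu i}, v_{\mu k}) w(v_{\mu k}, v_{\nu j})$, with $k$ ranging over $\nbd(a_{*i}) \cap \nbd(a_{*j}) = \nbd(v_{\nu i}) \cap \nbd(v_{\nu j})$, which is the right-hand side. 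Equating the two expressions for every $i$ and $j$ gives the asserted criterion.

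I do not anticipate a serious obstacle, since the argument is a direct transcription of matrix multiplication into the neighbourhood notation. The only point requiring care is the bookkeeping introduced by the dagger: passing from $A$ to $A^\dagger$ turns rows into conjugated columns and conversely, so the supports on the left are read off the \emph{rows} of $A$ (the sets $\nbd(v_{\mu\cdot})$) while those on the right are read off the \emph{columns} of $A$ (the sets $\nbd(v_{\nu\cdot})$). Keeping this consistent with the row/column identifications $\nbd(a_{i*}) = \nbd_{\mathcal{A}}(v_{\mu i})$ and $\nbd(a_{*i}) = \nbd_{\mathcal{A}}(v_{\nu i})$ established just before Lemma \ref{commutativity} is the one place where an index slip could occur.
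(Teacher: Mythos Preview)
Your proposal is correct and follows essentially the same route as the paper: both compute $(AA^\dagger)_{ij}$ and $(A^\dagger A)_{ij}$ entrywise, use the weight convention $\overline{w(u,v)} = w(v,u)$ to rewrite the conjugated factors, and then invoke the identifications $\nbd(a_{i*}) = \nbd_{\mathcal{A}}(v_{\mu i})$ and $\nbd(a_{*i}) = \nbd_{\mathcal{A}}(v_{\nu i})$ to restrict the index of summation. The only cosmetic difference is that the paper names $B = A^\dagger$ before unpacking the products, whereas you work directly with the rows and columns of $A^\dagger$.
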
	
		
		Now we consider a trivial observation related to the above lemma, which will be used later. Let there be only one edge of arbitrary non-zero weight, $(v_{\mu, p}, v_{\nu, q})$ with $p \neq q$, between two clusters $C_\mu$ and $C_\nu$. Now, for $i = j = p$,
		\begin{equation}
		\sum_{k \in \nbd(v_{\mu i}) \cap \nbd(v_{\mu j})} w(v_{\mu i}, v_{\nu k}) w(v_{\nu k}, v_{\mu j}) = w(v_{\mu p}, v_{\nu q}) w(v_{\nu q}, v_{\mu p}).
		\end{equation} 
		Also, for $i = j = p$ the set $\nbd(v_{\nu i}) \cap \nbd(v_{\nu j}) = \emptyset$, as $v_{\nu p}$ is an isolated vertex. Hence, the term $\sum_{k \in \nbd(v_{\nu i}) \cap \nbd(v_{\nu j})}$ $w(v_{\nu i}, v_{\mu k}) w(v_{\mu k}, v_{\nu j})$ takes no value. In this case, the graph $\langle C_\mu, C_\nu \rangle$ fails to fulfil the normality condition. Note that, for $p = q$ the graph $\langle C_\mu, C_\nu \rangle$ with single edge $(v_{\mu, p}, v_{\nu, q})$ represents a normal matrix.
		
		Let us recollect some important facts discussed above. The Lemma \ref{graphical_states} provides conditions on a quantum state $\rho$ acting on $\mathcal{H}^{(m)} \otimes \mathcal{H}^{(n)}$ to have a graph representation. In the equation (\ref{clustering}), we partition a digraph with $N = mn$ vertices into clusters. Thus two sets of subgraphs are generated: $\{\langle C_{\mu} \rangle: \mu = 1, 2, \dots m\}$ where every graph is of order $n$, and $\{\langle C_{\mu}, C_{\nu} \rangle: \mu, \nu = 1, 2, \dots m; \mu \neq \nu\}$ where every graph is bipartite graph with $2n$ vertices.

		\begin{theorem}\label{math_thm}
			The quantum state corresponding to the density matrix $\rho(G)$ has zero discord if and only if	\begin{enumerate}
				\item 
				\textbf{Commutativity condition:} Given any two subgraphs $\langle C_\mu, C_\nu \rangle$, and $\langle C_\alpha, C_\beta \rangle$ and for all $i, j$ with $1 \le i,j \le n$,
				$$\sum_{k \in \nbd(v_{\mu i}) \cap \nbd(v_{\beta j})} w(v_{\mu i}, v_{\nu k}) w(v_{\alpha k}, v_{\beta j}) = \sum_{k \in \nbd(v_{\alpha i}) \cap \nbd(v_{\nu j})} w(v_{\alpha i}, v_{\beta k}) w(v_{\mu k}, v_{\nu j}).$$
				\item
				\textbf{Normality condition:} For all subgraph $\langle C_\mu, C_{\nu} \rangle$ and for every $i$, and $j$ with $1 \le i, j \le n$,
				$$\sum_{k \in \nbd(v_{\mu i}) \cap \nbd(v_{\mu j})} w(v_{\mu i}, v_{\nu k}) w(v_{\nu k} v_{\mu j}) = \sum_{k \in \nbd(v_{\nu i}) \cap \nbd(v_{\nu j})} w(v_{\nu i}, v_{\mu k}) w(v_{\mu k} v_{\nu j}).$$
				\item
				\textbf{Degree condition} The graph satisfies the following two degree criterion,
				\begin{enumerate}
					\item
					\begin{equation}
						\begin{split} 
							\pm & \big[w(v_{\nu i},v_{\nu j})(d_{\mu i} - d_{\mu j}) + w(v_{\mu i}, v_{\mu j})(d_{\nu j} - d_{\nu i}) \big] \\
							+ & \sum_{k \in \nbd(v_{\mu i}) \cap \nbd(v_{\nu j})} w(v_{\mu i}, v_{\mu k}) w(v_{\nu k}, v_{\nu j})	\\
							- &\sum_{k \in \nbd(v_{\nu i}) \cap \nbd(v_{\mu j})} w(v_{\nu i}, v_{\nu k}) w(v_{\mu k}, v_{\mu j}) = 0,
						\end{split} 
					\end{equation}
					\item 
					\begin{equation}
						\begin{split} 
							& w(v_{\alpha i}, v_{\beta j}) (d_{\mu i} - d_{\mu j}) \pm \\ 
							& \big[ \sum_{k \in \nbd(v_{\mu i}) \cap \nbd(v_{\beta j})} w(v_{\mu i}, v_{\mu k}) w(v_{\alpha k}, v_{\beta j}) \\
							& - \sum_{k \in \nbd(v_{\alpha i}) \cap \nbd(v_{\mu j})} w(v_{\alpha i}, v_{\beta k}) w(v_{\mu k}, v_{\mu j}) \big] = 0.
						\end{split} 
					\end{equation}
				\end{enumerate}
			\end{enumerate}
		\end{theorem}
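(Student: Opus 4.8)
The plan is to reduce the statement to the characterization recalled just before equation (\ref{blocks}), namely that the blocks $\{B_{\mu\nu}\}$ form a family of commuting normal matrices exactly when (i) each $B_{\mu\nu}$ is normal and (ii) each pair of blocks commutes; I would then translate these two requirements, block by block, into the three listed graph conditions using the results of Section 3. The key structural input is equation (\ref{block_and_graph}): the off-diagonal blocks are $B_{\mu\nu} = sA_{\mu\nu}/d$ with $\mu \neq \nu$, whose associated bipartite digraph is precisely $\langle C_\mu, C_\nu\rangle$, while the diagonal blocks are $B_{\mu\mu} = (D_\mu + sA_{\mu\mu})/d$. Since $D_\mu$ is real diagonal and $A_{\mu\mu}$ is Hermitian, every diagonal block is Hermitian and hence automatically normal; this retires the diagonal blocks from requirement (i) and leaves only the off-diagonal blocks to check. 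For those, normality of $B_{\mu\nu}$ is equivalent to normality of $A_{\mu\nu}$ (the scalar $s/d$ being irrelevant), so applying Lemma \ref{normality} to each $\langle C_\mu, C_\nu\rangle$ yields precisely the Normality condition, item 2.

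For requirement (ii) I would split the pairwise commutativity checks into three cases according to which blocks are diagonal. When both blocks are off-diagonal, $B_{\mu\nu}B_{\alpha\beta} = B_{\alpha\beta}B_{\mu\nu}$ holds iff $A_{\mu\nu}A_{\alpha\beta} = A_{\alpha\beta}A_{\mu\nu}$ (the common factor $s^2/d^2 = 1/d^2$ cancels), which Lemma \ref{commutativity} converts directly into the Commutativity condition, item 1. When one block is diagonal and the other off-diagonal, I would expand
\begin{equation*}
	d^2[B_{\mu\mu}, B_{\alpha\beta}] = s[D_\mu, A_{\alpha\beta}] + s^2[A_{\mu\mu}, A_{\alpha\beta}],
\end{equation*}
read off the degree commutator entrywise as $[D_\mu, A_{\alpha\beta}]_{ij} = w(v_{\alpha i}, v_{\beta j})(d_{\mu i} - d_{\mu j})$ (immediate because $D_\mu$ is diagonal), evaluate $[A_{\mu\mu}, A_{\alpha\beta}]_{ij}$ through Corollary \ref{commutativity_1}, and then multiply the vanishing condition by $s$ (legitimate since $s^2 = 1$). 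This produces item 3(b), with the sign $\pm$ recording whether $s = 1$ or $s = -1$.

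The third case, two diagonal blocks, follows the same template applied to
\begin{equation*}
	d^2[B_{\mu\mu}, B_{\nu\nu}] = [D_\mu, D_\nu] + s\big([D_\mu, A_{\nu\nu}] - [D_\nu, A_{\mu\mu}]\big) + s^2[A_{\mu\mu}, A_{\nu\nu}].
\end{equation*}
Here $[D_\mu, D_\nu] = 0$ since diagonal matrices commute, the two mixed commutators contribute the degree terms $w(v_{\nu i}, v_{\nu j})(d_{\mu i} - d_{\mu j})$ and $w(v_{\mu i}, v_{\mu j})(d_{\nu j} - d_{\nu i})$, and $[A_{\mu\mu}, A_{\nu\nu}]_{ij}$ is supplied by Corollary \ref{commutativity_2}; setting the $(i,j)$ entry to zero gives item 3(a) with $\pm = s$. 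Because every result invoked is itself an equivalence, each translation is reversible, so the conjunction of items 1--3 is both necessary and sufficient, establishing the theorem in both directions. I expect the main obstacle to be organizational rather than deep: one must be exhaustive in the case split, taking care that the Hermiticity of the diagonal blocks genuinely removes them from the normality list, that every ordered pair of block types is covered, and that the $s$-dependent signs are tracked consistently so that the $\pm$ in items 3(a) and 3(b) lands on the intended bracket.
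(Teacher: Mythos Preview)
Your proposal is correct and follows essentially the same approach as the paper's own proof: both note that diagonal blocks are automatically Hermitian (hence normal), invoke Lemma~\ref{normality} for normality of off-diagonal blocks, Lemma~\ref{commutativity} for commutativity among off-diagonal blocks, and then expand the commutators $[B_{\mu\mu},B_{\nu\nu}]$ and $[B_{\mu\mu},B_{\alpha\beta}]$ entrywise via Corollaries~\ref{commutativity_1} and~\ref{commutativity_2} to obtain the two degree conditions. Your write-up is slightly more explicit about the case split and about why each step is reversible, but the argument is the same.
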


	\section{Discord of some graph Laplacian states}
	
		In this section, we illustrate the discussions in the previous sections by studying discord of some graph Laplacian states corresponding to well known quantum states. Before that, we make a list of graph properties that are useful for determining the quantum discord.
		
		The lemma \ref{commutativity} and \ref{normality} as well as their corollaries suggest arrangements of edge weights in the graph, necessary for zero discord. In addition, the degree of the vertices should fulfil the degree condition of the theorem \ref{math_thm}. If vertices of $C_\mu, \mu = 1, 2, \dots m$ have equal degree, then $d_{\mu i} - d_{\mu j} = 0$ independent of the existence of edge $(v_{\alpha i}, v_{\beta j})$. In this case, every pair of subgraph $\langle C_\mu \rangle$ and $\langle C_\alpha, C_\beta \rangle$ satisfies the Corollary \ref{commutativity_1}.
		
		Also if the subgraphs $\langle C_\mu \rangle$ and $\langle C_\nu \rangle$ fulfill the commutativity condition described in the corollary \ref{commutativity_2} then the first degree condition takes the following simpler form:
		\begin{equation}
		w(v_{\nu i},v_{\nu j})(d_{\mu i} - d_{\mu j}) + w(v_{\mu i}, v_{\mu j})(d_{\nu j} - d_{\nu i})  = 0 ~\text{for all}~ i, j.
		\end{equation} 
		Further, if the subgraphs $\langle C_\alpha, C_\beta \rangle$ and $\langle C_\mu \rangle$ satisfy the commutativity condition described in the corollary \ref{commutativity_1}, the equation is simplified to 
		\begin{equation}
		w(v_{\alpha i}, v_{\beta j}) (d_{\mu i} - d_{\mu j}) = 0 ~\text{for all}~ i, j.
		\end{equation}
		When the graph is a simple graph, $w(u,v) = 1$ for all $(u,v) \in E(G)$. Then this condition on weighted graphs is consistent with that of simple graphs \cite{dutta2017quantum}.

		\subsection{Two qubit pure states}
	
			Two qubit quantum states are the simplest bipartite quantum states. Here, we consider two examples of 2-qubit pure states: $\ket{\psi_1} = a\ket{00} + b\ket{11}$, and $\ket{\psi_2} = a\ket{00} + b\ket{01}$, where $|a|^2 + |b|^2 = 1$. Restricting $a$ and $b$ in $\ket{\psi_1}$ to $\frac{1}{\sqrt{2}}$ leads to the well known Bell state. The density matrices corresponding to these quantum states are 
			\begin{equation}
			\sigma_1 = \ket{\psi_1}\bra{\psi_1} = \begin{bmatrix} a^2 & 0 & 0 & ab \\ 0 & 0 & 0 & 0 \\ 0 & 0 & 0 & 0 \\ ab & 0 & 0 & b^2 \end{bmatrix}, ~\text{and}~  \sigma_2 = \ket{\psi_2}\bra{\psi_2} = \begin{bmatrix} a^2 & ab & 0 & 0 \\ ab & b^2 & 0 & 0  \\ 0 & 0 & 0 & 0 \\ 0 & 0 & 0 & 0 \end{bmatrix},
			\end{equation}
			respectively.
			
			From Lemma 1, we can see that these density matrices represent graph Laplacian quantum states if $a^2 \geq ab$ and $b^2 \geq ab$. If $a \neq 0$ and $b \neq 0$ then these two inequalities together imply $a = b$. A density matrix of order $4$ corresponds to a graph with four vertices. Also, the graphs representing 2-qubit bipartite states must have two clusters. Let $\sigma_1 = \rho_q(G_1)$, and $\sigma_2 = \rho_q(G_2)$. Then,
			\begin{equation}
			G_1 \equiv \xymatrix{\bullet_{00} \ar@{-}[dr] & \bullet_{01} \\ \bullet_{10} & \bullet_{11}} ~\text{and}~ G_2 = \xymatrix{\bullet_{00} \ar@{-}[r] & \bullet_{01} \\ \bullet_{10} & \bullet_{11}}
			\end{equation}
			
			From the conditions in Theorem 1, we can visualize that the graph $G_1$ violates the normality condition. But, the graph $G_2$ satisfies all of them. Hence, the state $\sigma_2$ has zero discord, but $\sigma_1$ has non-zero discord. This example clearly indicates that quantum discord of states depend on the distribution of edges in the graph.
		
		\subsection{Werner state}
			The Werner state is a well-known bipartite mixed quantum state. A Werner state \cite{werner1989quantum} is represented by,
			\begin{equation}\label{werner}
			\rho_{x,d} = \frac{d - x}{d^3 - d}I + \frac{xd - 1}{d^3 - d}F,
			\end{equation}
			where $F = \sum_{i,j}^d \ket{i}\bra{j} \otimes \ket{j}\bra{i}$, $x \in [0, 1]$ and $d$ is the dimension of the individual subsystems. Note that, $\rho_{x,d}$ is a real symmetric matrix of order $d^2$. These are separable Werner states with non-zero quantum discord \cite{luo2008using}. 
			
			We show that all the Werner states are graph Laplacian states. As $\rho_{x,d}$ acts on the space $\mathcal{H}^{(d)} \otimes \mathcal{H}^{(d)}$, we partition the vertex set into $d$ clusters $C_\mu, \mu = 1, 2, \dots d$, each having $d$ vertices. The corresponding digraph has three types of edges:
			\begin{enumerate}
				\item 
				Loops at diagonal vertices $v_{11}, v_{22}, \dots v_{dd}$ having loop weights $w(v_{\mu, \mu}, v_{\mu, \mu}) = (d -1) + (d - 1)x$.
				\item
				Loops at non-diagonal vertices $\{v_{\mu , i}: \mu \neq i\}$ having loop weights $w(v_{\mu , i}, v_{\mu , i}) = d - x$.
				\item
				Non-loop edges with weight $w(v_{\mu,i}, v_{i, \mu}) = dx - 1$. Note that, there is only one edge between two different clusters. All such edges are diagonal and parallel to each-other.
			\end{enumerate}
			The following example would help to illustrate this structure.
			
			\begin{example}
				We may represent $\rho_{x,3}$, and $\rho_{x,4}$ as a graph with $9$ and $16$ vertices depicted in figure \ref{werner_graph_1} and \ref{werner_graph_2}. The edge weights $a, b$, and $c$ represents weights of different classes of edges discussed above.
				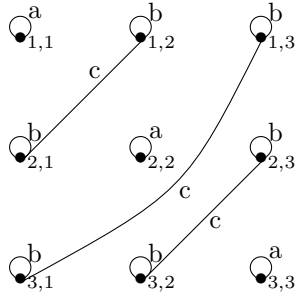
\begin{figure}
					\begin{center}
						\begin{tikzpicture}[scale = .8]
						\node at (0,4) {$\bullet_{1,1}$};
						\node at (2,4) {$\bullet_{1,2}$};
						\node at (4,4) {$\bullet_{1,3}$};
						\node at (0,2) {$\bullet_{2,1}$};
						\node at (2,2) {$\bullet_{2,2}$};
						\node at (4,2) {$\bullet_{2,3}$};
						\node at (0,0) {$\bullet_{3,1}$};
						\node at (2,0) {$\bullet_{3,2}$};
						\node at (4,0) {$\bullet_{3,3}$};
						
						\draw {[rounded corners] (-.24,4) -- (.01,4.25) -- (-.24, 4.5) -- (-.49, 4.25) -- (-.24,4)};
						\node at (0,4.5) {a};
						
						\draw {[rounded corners] (1.76,2) -- (2.01,2.25) -- (1.76, 2.5) -- (1.51, 2.25) -- (1.76,2)};
						\node at (2.01,2.5) {a};
						
						\draw {[rounded corners] (3.76,0) -- (4.01,0.25) -- (3.76, 0.5) -- (3.51, 0.25) -- (3.76,0)};
						\node at (4,0.5) {a};
						
						\draw {[rounded corners] (-.24,2) -- (.01,2.25) -- (-.24, 2.5) -- (-.49, 2.25) -- (-.24,2)};
						\node at (0,2.5) {b};
						
						\draw {[rounded corners] (-.24,0) -- (.01,0.25) -- (-.24, 0.5) -- (-.49, 0.25) -- (-.24,0)};
						\node at (0,0.5) {b};
						
						\draw {[rounded corners] (1.76,4) -- (2.01,4.25) -- (1.76, 4.5) -- (1.51, 4.25) -- (1.76,4)};
						\node at (2.01,4.5) {b};
						
						\draw {[rounded corners] (1.76,0) -- (2.01,0.25) -- (1.76, 0.5) -- (1.51, 0.25) -- (1.76,0)};
						\node at (2.01,0.5) {b};
						
						\draw {[rounded corners] (3.76,4) -- (4.01,4.25) -- (3.76, 4.5) -- (3.51, 4.25) -- (3.76,4)};
						\node at (4,4.5) {b};
						
						\draw {[rounded corners] (3.76,2) -- (4.01,2.25) -- (3.76, 2.5) -- (3.51, 2.25) -- (3.76,2)};
						\node at (4,2.5) {b};
						
						\draw (-.24,2) -- (1.76,4);
						\node at (1, 3.5) {c};
						
						\draw (1.76,0) -- (3.76,2);
						\node at (3, 1) {c};
						
						\draw {[rounded corners] (-.24,0) .. controls(2.5,1.5) .. (3.76,4)};
						\node at (2.5, 1.5) {c};
						
					\end{tikzpicture}
				\caption{Graph for $\rho_{x,3}$. Here $a = 2 + 2x$, $b = 3 - x$, and $c = 3x - 1$.}
				\label{werner_graph_1}
			\end{center}
		\end{figure} 
		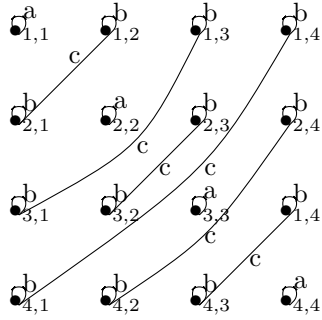
\begin{figure}
			\begin{center} 
				\begin{tikzpicture}[scale = .6]
					\node at (0,6) {$\bullet_{1,1}$};
					\node at (2,6) {$\bullet_{1,2}$};
					\node at (4,6) {$\bullet_{1,3}$};
					\node at (6,6) {$\bullet_{1,4}$};
					\node at (0,4) {$\bullet_{2,1}$};
					\node at (2,4) {$\bullet_{2,2}$};
					\node at (4,4) {$\bullet_{2,3}$};
					\node at (6,4) {$\bullet_{2,4}$};
					\node at (0,2) {$\bullet_{3,1}$};
					\node at (2,2) {$\bullet_{3,2}$};
					\node at (4,2) {$\bullet_{3,3}$};
					\node at (6,2) {$\bullet_{1,4}$};
					\node at (0,0) {$\bullet_{4,1}$};
					\node at (2,0) {$\bullet_{4,2}$};
					\node at (4,0) {$\bullet_{4,3}$};
					\node at (6,0) {$\bullet_{4,4}$};
					
					\draw {[rounded corners] (-.24,6) -- (.01,6.25) -- (-.24, 6.5) -- (-.49, 6.25) -- (-.24,6)};
					\node at (0,6.5) {a};
					
					\draw {[rounded corners] (1.76,4) -- (2.01,4.25) -- (1.76, 4.5) -- (1.51, 4.25) -- (1.76,4)};
					\node at (2.01,4.5) {a};
					
					\draw {[rounded corners] (3.76,2) -- (4.01,2.25) -- (3.76, 2.5) -- (3.51, 2.25) -- (3.76,2)};
					\node at (4,2.5) {a};
					
					\draw {[rounded corners] (5.76,0) -- (6.01,0.25) -- (5.76, 0.5) -- (5.51, 0.25) -- (5.76,0)};
					\node at (6,0.5) {a};
					
					\draw {[rounded corners] (-.24,4) -- (.01,4.25) -- (-.24, 4.5) -- (-.49, 4.25) -- (-.24,4)};
					\node at (0,4.5) {b};
					
					\draw {[rounded corners] (1.76,6) -- (2.01,6.25) -- (1.76, 6.5) -- (1.51, 6.25) -- (1.76,6)};
					\node at (2.01,6.5) {b};
					
					\draw {[rounded corners] (1.76,2) -- (2.01,2.25) -- (1.76, 2.5) -- (1.51, 2.25) -- (1.76,2)};
					\node at (2.01,2.5) {b};
					
					\draw {[rounded corners] (3.76,0) -- (4.01,0.25) -- (3.76, 0.5) -- (3.51, 0.25) -- (3.76,0)};
					\node at (4,0.5) {b};
					
					\draw {[rounded corners] (-.24,2) -- (.01,2.25) -- (-.24, 2.5) -- (-.49, 2.25) -- (-.24,2)};
					\node at (0,2.5) {b};
					
					\draw {[rounded corners] (-.24,0) -- (.01,0.25) -- (-.24, 0.5) -- (-.49, 0.25) -- (-.24,0)};
					\node at (0,0.5) {b};
					
					\draw {[rounded corners] (1.76,0) -- (2.01,0.25) -- (1.76, 0.5) -- (1.51, 0.25) -- (1.76,0)};
					\node at (2.01,0.5) {b};
					
					\draw {[rounded corners] (3.76,4) -- (4.01,4.25) -- (3.76, 4.5) -- (3.51, 4.25) -- (3.76,4)};
					\node at (4,4.5) {b};
					
					\draw {[rounded corners] (5.76,2) -- (6.01,2.25) -- (5.76, 2.5) -- (5.51, 2.25) -- (5.76,2)};
					\node at (6,2.5) {b};
					
					\draw {[rounded corners] (5.76,4) -- (6.01,4.25) -- (5.76, 4.5) -- (5.51, 4.25) -- (5.76,4)};
					\node at (6,4.5) {b};
					
					\draw {[rounded corners] (5.76,6) -- (6.01,6.25) -- (5.76, 6.5) -- (5.51, 6.25) -- (5.76,6)};
					\node at (6,6.5) {b};
					
					\draw {[rounded corners] (3.76,6) -- (4.01,6.25) -- (3.76, 6.5) -- (3.51, 6.25) -- (3.76,6)};
					\node at (4,6.5) {b};
					
					\draw (-.24,4) -- (1.76,6);
					\node at (1, 5.5) {c};
					
					\draw (1.76,2) -- (3.76,4);
					\node at (3, 3) {c};
					
					\draw {[rounded corners] (-.24,2) .. controls(2.5,3.5) .. (3.76,6)};
					\node at (2.5, 3.5) {c};
					
					\draw {[rounded corners] (1.76,0) .. controls(4,1.5) .. (5.76,4)};
					\node at (4, 1.5) {c};
					
					\draw (3.76,0) -- (5.76,2);
					\node at (5, 1) {c};
					
					\draw {[rounded corners] (-.24,0) .. controls(4,3) .. (5.76,6)};
					\node at (4, 3) {c};
				\end{tikzpicture}
				\caption{Graph for $\rho_{x,4}$. Here, $a = 3x + 3$, $b = 4 - x$, and $c = 4x - 1$.}
				\label{werner_graph_2}
			\end{center}
		\end{figure}
		\end{example}
		
		\begin{theorem}\label{werner_state_discord_statement}
			Graph Laplacian Werner states have non-zero quantum discord except for certain discrete values of $x$.
		\end{theorem}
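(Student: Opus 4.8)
The plan is to invoke the block characterization of zero discord recalled after equation (\ref{blocks}) (the theorem of \cite{huang2011new}), equivalently the combined criterion of Theorem \ref{math_thm}: the Werner state $\rho_{x,d}$ has zero quantum discord if and only if the family $\{B_{\mu\nu}\}$ of its $d\times d$ blocks is a family of mutually commuting normal matrices. Because the failure of normality of a single block already breaks this property, I would concentrate first on one off-diagonal block and show that it is non-normal for all but one value of $x$.

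First I would read the blocks off the stated edge structure. Between two distinct clusters $C_\mu$ and $C_\nu$ the only non-loop edge is $(v_{\mu\nu},v_{\nu\mu})$ of weight $dx-1$, so the block $A_{\mu\nu}$ has a single non-zero entry, located in row $\nu$ and column $\mu$; since $\mu\neq\nu$ this entry lies off the diagonal. Thus, up to a fixed nonzero scalar, $B_{\mu\nu}$ equals the elementary matrix $e$ carrying that single off-diagonal entry. Writing $B_{\mu\nu}=c\,e$, a direct computation shows that $B_{\mu\nu}B_{\mu\nu}^\dagger$ and $B_{\mu\nu}^\dagger B_{\mu\nu}$ are supported on different diagonal positions and hence are unequal whenever $c\neq 0$, i.e.\ whenever $dx-1\neq 0$; this is precisely the single-off-diagonal-edge situation flagged in the remark following Lemma \ref{normality} as violating the normality condition. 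Consequently, for every $x\neq 1/d$ the block $B_{\mu\nu}$ is not normal, the family $\{B_{\mu\nu}\}$ is not a commuting normal family, and $\rho_{x,d}$ has non-zero quantum discord.

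It remains to treat the exceptional value $x=1/d$. Here $dx-1=0$ kills every off-diagonal block, while the loop weights $a=(d-1)(1+x)$ and $b=d-x$ both collapse to $(d^2-1)/d$; substituting $x=1/d$ into (\ref{werner}) shows that $\rho_{1/d,d}$ is simply the maximally mixed state $\tfrac{1}{d^2}I$. Its diagonal blocks are scalar multiples of the identity and its off-diagonal blocks vanish, so $\{B_{\mu\nu}\}$ is trivially a commuting normal family and the same characterization yields zero discord. This isolates $x=1/d$ as the sole exceptional value and establishes the theorem.

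I expect the only real obstacle to be the index bookkeeping in the second paragraph: one must carefully convert the Greek/Roman cluster indexing of Section 2 to confirm that the unique inter-cluster edge lands off the diagonal of $A_{\mu\nu}$ (it joins position $\nu$ of cluster $\mu$ to position $\mu$ of cluster $\nu$, so the relevant entry is $(A_{\mu\nu})_{\nu\mu}$ with $\nu\neq\mu$). Once that placement is verified, the normality failure is immediate and the remaining verifications are routine.
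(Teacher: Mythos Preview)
Your proposal is correct and follows essentially the same route as the paper: both isolate an off-diagonal block $B_{\mu\nu}$, observe that the unique inter-cluster edge $(v_{\mu\nu},v_{\nu\mu})$ places its sole nonzero entry off the diagonal, and invoke the single-edge remark after Lemma~\ref{normality} to conclude non-normality whenever $dx-1\neq 0$. Your treatment is in fact slightly more complete than the paper's, since you explicitly identify the exceptional parameter as $x=1/d$ and verify that $\rho_{1/d,d}$ is the maximally mixed state (hence zero discord), whereas the paper only exhibits the $d=3$ instance and leaves the general exceptional value implicit.
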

		
		\begin{proof} 
			Note that, for all $x$ there is only an edge $(v_{\mu,i}, v_{i, \mu})$ in the subgraph $\langle C_\mu, C_i \rangle$ where $\mu \neq i$. After the lemma \ref{normality} we have shown such type of graphs cannot fulfill normality condition.
			
			As an example, consider the subdigraph $\langle C_1, C_2 \rangle$ of $\rho_{x,3}$ depicted in figure \ref{werner_proof_graph}. 
			\begin{figure}
				\begin{center}
					\begin{tikzpicture}
					\node at (0,1) {$\bullet_{1,1}$};
					\node at (2,1) {$\bullet_{1,2}$};
					\node at (4,1) {$\bullet_{1,3}$};
					\node at (0,0) {$\bullet_{2,1}$};
					\node at (2,0) {$\bullet_{2,2}$};
					\node at (4,0) {$\bullet_{2,3}$};
					\draw (1.8,1.05) -- (-.2,.05);
					\node at (1.8, .5) {$(3x - 1)$};
					\end{tikzpicture}
				\end{center}
				\caption{Subgraph $\langle C_1, C_2 \rangle$ of the graph $\rho_{x,3}$ drawn in figure \ref{werner_graph_1}}.
				\label{werner_proof_graph} 
			\end{figure}
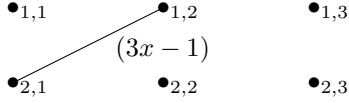
			There is only one edge $(v_{1, 2}, v_{2,1})$ with weight $(3x - 1)$ between two clusters $C_1$ and $C_2$. The edge weight is non-zero when $x \neq \frac{1}{3}$. Note that, $w(v_{12}, v_{21}) w(v_{21}, v_{12}) = (3x - 1)^2$ but $w(v_{22}, v_{12})w(v_{12}, v_{12}) = 0$ as $v_{22}$ is an isolated vertex. In this case, the graph $\langle C_1, C_2 \rangle$ fulfills the normality condition if and only if $x = \frac{1}{3}$.
			
			Thus the normality condition of the theorem \ref{math_thm} is violated except for some parameter values. Hence, graph Laplacian Werner states have non-zero quantum discord except for some specific values of $x$.
		\end{proof}

		\subsection{Isotropic state}
			
			An isotropic state $\rho_{d,x}$ acting on $\mathcal{H}^{(d)} \otimes \mathcal{H}^{(d)}$ is defined by,
			\begin{equation}\label{isotropic_equation}
			\rho_{d,x} = \frac{d^2}{d^2 - 1}\left[\frac{(1 - F)}{d^2} I + \left(F - \frac{1}{d^2} \right)P \right],
			\end{equation}
			where $F \in [0,1]$ is the fidelity of the quantum state and $P = \ket{\psi}\bra{\psi}$ where $\ket{\psi} = \frac{1}{\sqrt{d}}\sum_{i}\ket{i_a}\ket{i_b}$, the maximally entangled state in dimension $d$. Discord of isotropic state is studied in \cite{luo2010geometric, guo2016non}. Considering diagonal and off-diagonal terms we may conclude that an isotropic quantum state is a graph Laplacian state provided 
			\begin{equation}
			(d - 1)\left| F - \frac{1}{d^2}\right| \le \frac{d^2 - 1}{d^2}F.
			\end{equation}
			Putting $d = 2, 3, 4$ in the above equation we get, $\frac{1}{7} \le F \le 1$, $\frac{1}{13} \le F \le \frac{1}{5}$, $\frac{1}{11} \le F \le \frac{1}{21}$, respectively.
			
			As $\rho_{d,x}$ acts on $\mathcal{H}^{(d)} \otimes \mathcal{H}^{(d)}$, we represent the vertex set into $d$ clusters $C_\mu: \mu = 1, 2, \dots d$ with $C_\mu = \{v_{\mu 1}, v_{\mu 2}, \dots v_{\mu n}\}$. We observe that a graph representing an isotropic state has the following properties.
			\begin{enumerate}
				\item
				The diagonal vertices $v_{1,1}, v_{22}, \dots v_{dd}$ form a complete graph which consists of all non-loop edges of the graph. Weight of these edges are $F - \frac{1}{d^2}$.
				\item
				The loop weight of the non-diagonal vertices is $\frac{1 - F}{d^2}$.
				\item
				The loop weight of the diagonal vertices are given by $\frac{d^2 - 1}{d^2}F$.
			\end{enumerate}
			\begin{example}
				Graph representations of the isotropic state $\rho_{d,x}$ for $d = 2, 3, 4$ are depicted in the figure \ref{isotropic_graphs_2}, \ref{isotropic_graphs_3}, and \ref{isotropic_graphs_4}. In the picture, all the edges and loops are weighted as described above.
				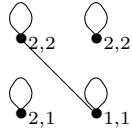
\begin{figure}
					\begin{center} 
						\begin{tikzpicture}
						\node at (0,0) {$\bullet_{2,1}$};
						\node at (0,1) {$\bullet_{2,2}$};
						\node at (1,0) {$\bullet_{1,1}$};
						\node at (1,1) {$\bullet_{2,2}$};
						\draw (-.25,1.1) -- (.75,.1);
						\draw {[rounded corners] (-.2, .1) -- (-.4, .3) -- (-.2, .6) -- (0, .3) -- (-.2, .1)};
						\draw {[rounded corners] (-.2, 1.1) -- (-.4, 1.3) -- (-.2, 1.6) -- (0, 1.3) -- (-.2, 1.1)};
						\draw {[rounded corners] (.8, .1) -- (.6, .3) -- (.8, .6) -- (1, .3) -- (.8, .1)};
						\draw {[rounded corners] (.8, 1.1) -- (.6, 1.3) -- (.8, 1.6) -- (1, 1.3) -- (.8, 1.1)};
						\end{tikzpicture}
						\caption{Isotropic state $\rho_{2,x}$.}
						\label{isotropic_graphs_2}
					\end{center}
				\end{figure}
			 
				\begin{figure}
					\begin{center} 
						\begin{tikzpicture}
							\node at (0,0) {$\bullet_{31}$};
							\node at (1,0) {$\bullet_{32}$};
							\node at (2,0) {$\bullet_{33}$};
							\node at (0,1) {$\bullet_{21}$};
							\node at (1,1) {$\bullet_{22}$};
							\node at (2,1) {$\bullet_{23}$};
							\node at (0,2) {$\bullet_{11}$};
							\node at (1,2) {$\bullet_{12}$};
							\node at (2,2) {$\bullet_{1,3}$};
							\draw (-.2,2.1) -- (.8,1.1) -- (1.8, .1);
							\draw {[rounded corners] (-.2,2.1) -- (1.2, 1.2) -- (1.8, .1)};
							\draw {[rounded corners] (-.15, .1) -- (-.35, .3) -- (-.15, .6) -- (.05, .3) -- (-.15, .1)};
							\draw {[rounded corners] (.85, .1) -- (.65, .3) -- (.85, .6) -- (1.05, .3) -- (.85, .1)};
							\draw {[rounded corners] (1.85, .1) -- (1.65, .3) -- (1.85, .6) -- (2.05, .3) -- (1.85, .1)};
							\draw {[rounded corners] (-.15, 1.1) -- (-.35, 1.3) -- (-.15, 1.6) -- (.05, 1.3) -- (-.15, 1.1)};
							\draw {[rounded corners] (.85, 1.1) -- (.65, 1.3) -- (.85, 1.6) -- (1.05, 1.3) -- (.85, 1.1)};
							\draw {[rounded corners] (1.85, 1.1) -- (1.65, 1.3) -- (1.85, 1.6) -- (2.05, 1.3) -- (1.85, 1.1)};
							\draw {[rounded corners] (-.15, 2.1) -- (-.35, 2.3) -- (-.15, 2.6) -- (.05, 2.3) -- (-.15, 2.1)};
							\draw {[rounded corners] (.85, 2.1) -- (.65, 2.3) -- (.85, 2.6) -- (1.05, 2.3) -- (.85, 2.1)};
							\draw {[rounded corners] (1.85, 2.1) -- (1.65, 2.3) -- (1.85, 2.6) -- (2.05, 2.3) -- (1.85, 2.1)};
						\end{tikzpicture}
						\caption{Isotropic state $\rho_{3,x}$} 
						\label{isotropic_graphs_3}
					\end{center} 
				\end{figure}
			
				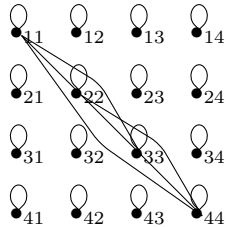
\begin{figure}
					\begin{center} 
						\begin{tikzpicture}[scale = .8]
							\node at (0,0) {$\bullet_{41}$};
							\node at (1,0) {$\bullet_{42}$};
							\node at (2,0) {$\bullet_{43}$};
							\node at (3,0) {$\bullet_{44}$};
							\node at (0,1) {$\bullet_{31}$};
							\node at (1,1) {$\bullet_{32}$};
							\node at (2,1) {$\bullet_{33}$};
							\node at (3,1) {$\bullet_{34}$};
							\node at (0,2) {$\bullet_{21}$};
							\node at (1,2) {$\bullet_{22}$};
							\node at (2,2) {$\bullet_{23}$};
							\node at (3,2) {$\bullet_{24}$};
							\node at (0,3) {$\bullet_{11}$};
							\node at (1,3) {$\bullet_{12}$};
							\node at (2,3) {$\bullet_{13}$};
							\node at (3,3) {$\bullet_{14}$};
							\draw (-.1, 3) -- (.9,2) -- (1.9,1) -- (2.9,0);
							\draw {[rounded corners] (-.1, 3) -- (1.2, 2.2) -- (1.9,1)};
							\draw {[rounded corners] (.9,2) -- (2.2, 1.2) -- (2.9,0)};
							\draw {[rounded corners] (-.1, 3) -- (1.2, 1.2) -- (2.9,0)};
							\draw {[rounded corners] (-.15, .1) -- (-.35, .3) -- (-.15, .6) -- (.05, .3) -- (-.15, .1)};
							\draw {[rounded corners] (.85, .1) -- (.65, .3) -- (.85, .6) -- (1.05, .3) -- (.85, .1)};
							\draw {[rounded corners] (1.85, .1) -- (1.65, .3) -- (1.85, .6) -- (2.05, .3) -- (1.85, .1)};
							\draw {[rounded corners] (2.85, .1) -- (2.65, .3) -- (2.85, .6) -- (3.05, .3) -- (2.85, .1)};
							\draw {[rounded corners] (-.15, 1.1) -- (-.35, 1.3) -- (-.15, 1.6) -- (.05, 1.3) -- (-.15, 1.1)};
							\draw {[rounded corners] (.85, 1.1) -- (.65, 1.3) -- (.85, 1.6) -- (1.05, 1.3) -- (.85, 1.1)};
							\draw {[rounded corners] (1.85, 1.1) -- (1.65, 1.3) -- (1.85, 1.6) -- (2.05, 1.3) -- (1.85, 1.1)};
							\draw {[rounded corners] (2.85, 1.1) -- (2.65, 1.3) -- (2.85, 1.6) -- (3.05, 1.3) -- (2.85, 1.1)};
							\draw {[rounded corners] (-.15, 2.1) -- (-.35, 2.3) -- (-.15, 2.6) -- (.05, 2.3) -- (-.15, 2.1)};
							\draw {[rounded corners] (.85, 2.1) -- (.65, 2.3) -- (.85, 2.6) -- (1.05, 2.3) -- (.85, 2.1)};
							\draw {[rounded corners] (1.85, 2.1) -- (1.65, 2.3) -- (1.85, 2.6) -- (2.05, 2.3) -- (1.85, 2.1)};
							\draw {[rounded corners] (2.85, 2.1) -- (2.65, 2.3) -- (2.85, 2.6) -- (3.05, 2.3) -- (2.85, 2.1)};
							\draw {[rounded corners] (-.15, 3.1) -- (-.35, 3.3) -- (-.15, 3.6) -- (.05, 3.3) -- (-.15, 3.1)};
							\draw {[rounded corners] (.85, 3.1) -- (.65, 3.3) -- (.85, 3.6) -- (1.05, 3.3) -- (.85, 3.1)};
							\draw {[rounded corners] (1.85, 3.1) -- (1.65, 3.3) -- (1.85, 3.6) -- (2.05, 3.3) -- (1.85, 3.1)};
							\draw {[rounded corners] (2.85, 3.1) -- (2.65, 3.3) -- (2.85, 3.6) -- (3.05, 3.3) -- (2.85, 3.1)};
						\end{tikzpicture}
					\caption{Isotropic state $\rho_{4,x}$}
					\label{isotropic_graphs_4}   
					\end{center} 
				\end{figure}
			\end{example}
			\begin{theorem}
				Graph Laplacian isotropic states have non-zero quantum discord except for certain discrete values of $F$.
			\end{theorem}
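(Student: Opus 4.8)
The plan is to reuse, almost verbatim, the single-edge normality obstruction that settled the Werner case in Theorem \ref{werner_state_discord_statement}. First I would read off the cross-cluster structure of the isotropic graph from its three listed edge types: the only non-loop edges form the complete graph on the diagonal vertices $v_{11}, v_{22}, \dots, v_{dd}$, each carrying weight $F - \frac{1}{d^2}$. Because $v_{\mu\mu} \in C_\mu$ and $v_{\nu\nu} \in C_\nu$, the unique edge joining two distinct clusters $C_\mu$ and $C_\nu$ is $(v_{\mu\mu}, v_{\nu\nu})$. Thus the bipartite subgraph $\langle C_\mu, C_\nu \rangle$ has exactly one cross edge, sitting at position $p = \mu$ in the first cluster and $q = \nu$ in the second, whence $p \neq q$.

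With this in hand, I would apply the observation recorded just after Lemma \ref{normality}, which says precisely that a single cross edge $(v_{\mu p}, v_{\nu q})$ with $p \neq q$ makes $\langle C_\mu, C_\nu \rangle$ violate the normality condition. Evaluating the normality identity of Lemma \ref{normality} at $i = j = \mu$ yields $\left(F - \frac{1}{d^2}\right)^2$ on the left side and $0$ on the right side, since $v_{\nu\mu}$ is an isolated vertex of the subgraph and $\nbd(v_{\nu\mu}) \cap \nbd(v_{\nu\mu}) = \emptyset$. The two sides agree only when $F - \frac{1}{d^2} = 0$.

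Consequently, for every $F \neq \frac{1}{d^2}$ at least one off-diagonal block fails to be normal, so the blocks of $\rho_{d,x}$ do not form a family of commuting normal matrices; by Theorem \ref{math_thm} (equivalently the criterion of \cite{huang2011new}), the discord is then non-zero. The sole exceptional weight, $F = \frac{1}{d^2}$, annihilates every cross edge and simultaneously equalises all loop weights, collapsing the state to the maximally mixed state, which trivially has zero discord. I do not anticipate a genuine obstacle here, since the entire argument transcribes the Werner proof; the only point demanding care is verifying that the diagonal vertices are the sole carriers of cross-cluster edges, so that each $\langle C_\mu, C_\nu \rangle$ truly contains just one off-diagonal edge and the single-edge normality failure applies without modification.
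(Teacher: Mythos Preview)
Your proposal is correct and follows essentially the same approach as the paper: both pin down that each bipartite subgraph $\langle C_\mu, C_\nu\rangle$ carries a single cross edge at unequal positions and then invoke the single-edge normality obstruction recorded after Lemma~\ref{normality}, exactly as in the Werner proof. You are in fact slightly more precise than the paper, correctly locating the edge at $(v_{\mu\mu}, v_{\nu\nu})$ and isolating the unique exceptional value $F = \tfrac{1}{d^2}$, whereas the paper leaves the exceptional value implicit (and its Figure~\ref{proof_isotropic} appears to be copied from the Werner case rather than redrawn for the isotropic edge).
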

			\begin{proof}
				From the graph structure of the state $\rho$, Eq. (\ref{isotropic_equation}), we see that the family of subgraphs $\{\langle C_\mu, C_\nu\rangle\}$ do not satisfy the commutativity and normality criterion, except some specific edge weights. 
				
				As an example consider the subgraph $\langle C_1, C_2 \rangle$ of the graph $\rho_{3,x}$ depicted in the figure \ref{proof_isotropic}. The subgraph $\langle C_1, C_2 \rangle$ also breaks the normality condition for all non-zero edge weights due to reasons similar to those stated in theorem \ref{werner_state_discord_statement}.
				\begin{figure}
					\begin{center}
						\begin{tikzpicture}
						\node at (0,1) {$\bullet_{1,1}$};
						\node at (2,1) {$\bullet_{1,2}$};
						\node at (4,1) {$\bullet_{1,3}$};
						\node at (0,0) {$\bullet_{2,1}$};
						\node at (2,0) {$\bullet_{2,2}$};
						\node at (4,0) {$\bullet_{2,3}$};
						\draw (0,.9) -- (1.8, .1);
						\node at (1.8, .5) {$(F - \frac{1}{d^2})$};
						\end{tikzpicture}
						\caption{Subgraph $\langle C_1, C_2 \rangle$ of $\rho_{3,x}$ depicted in the figure \ref{isotropic_graphs_3}}
						\label{proof_isotropic}
					\end{center}
				\end{figure}
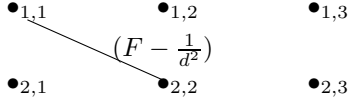
				
				Thus, we may conclude that graph Laplacian isotropic states have non-zero quantum discord except for some specific values of $F$.
			\end{proof}
			
		\subsection{X state}
			
			The $X$-state is well known in quantum information theory due to the specific structure of its density matrix. Discord of some classes of $2$-qubit $X$-states have been studied in the literature \cite{ali2010quantum, sabapathy2013quantum}. Here, we consider graph Laplacian (definition \ref{graphical_state_definition}) $X$-states acting on $\mathcal{H}^{(m)} \otimes \mathcal{H}^{(n)}$. Hence, as before the vertex set of the corresponding digraph has $m$ clusters $C_\mu, \mu = 1, 2, \dots m$, each containing $n$ vertices. The distribution of the non-zero elements in the density matrices suggest that the edge set has the following combinatorial characteristics:
			\begin{enumerate}
				\item 
				If the bipartite subgraph $\langle C_{\mu}, C_{\nu} \rangle$ is non-empty then all the edges are of the form $(v_{\mu k}, v_{\nu (n + 1 - k)})$ for $k = 1, 2, \dots n$.
				\item
				There is only one non-empty subgraph $\langle C_{\alpha} \rangle$ with edges of the form $(v_{\alpha k}, v_{\alpha (n + 1 - k)})$ for $k = 1, 2, \dots n$.
			\end{enumerate}
			Conversely if the edge set of any graph follows the above two properties the corresponding quantum states will be classified as an $X$ state.
			
			\begin{example}
				Some of the graphs of $X$ states without edge weights and directions are depicted in the figures \ref{x_state_1} and \ref{x_state_2}.
				\begin{figure}
					\begin{center}
						\begin{tikzpicture}
							\node at (0,0) {$\bullet_{1,1}$};
							\node at (0,1) {$\bullet_{2,1}$};
							\node at (0,2) {$\bullet_{3,1}$};
							\node at (1,0) {$\bullet_{1,2}$};
							\node at (1,1) {$\bullet_{2,2}$};
							\node at (1,2) {$\bullet_{3,2}$};
							\node at (2,0) {$\bullet_{1,3}$};
							\node at (2,1) {$\bullet_{2,3}$};
							\node at (2,2) {$\bullet_{3,3}$};
							\draw (-.15,.1) .. controls(1.2,.8).. (1.85,2.1);
							\draw (-.15,2.1) .. controls(.7, .8).. (1.85,.1);
							\draw (-.15,1.1) .. controls(1, 1.5).. (1.85,1.1);
							\draw (.85, .1) .. controls(1.5, 1).. (.85,2); 
						\end{tikzpicture}
					\end{center} 
					\caption{$X$ state acting on $\mathcal{H}^3 \otimes \mathcal{H}^3$.}
					\label{x_state_1}
				\end{figure}
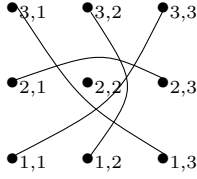

				\begin{figure}
					\begin{center} 
						\begin{tikzpicture}[scale = .6]
							\node at (0,0) {$\bullet_{1,1}$};
							\node at (0,2) {$\bullet_{2,1}$};
							\node at (0,4) {$\bullet_{3,1}$};
							\node at (2,0) {$\bullet_{1,2}$};
							\node at (2,2) {$\bullet_{2,2}$};
							\node at (2,4) {$\bullet_{3,2}$};
							\node at (4,0) {$\bullet_{1,3}$};
							\node at (4,2) {$\bullet_{2,3}$};
							\node at (4,4) {$\bullet_{3,3}$};
							\node at (6,0) {$\bullet_{1,4}$};
							\node at (6,2) {$\bullet_{2,4}$};
							\node at (6,4) {$\bullet_{3,4}$};
							\draw (-.2, .1) -- (5.8, 4.1);
							\draw (1.8, .1) -- (3.8, 4.1);
							\draw (3.8, .1) -- (1.8, 4.1);
							\draw (5.8, .1) -- (-.2, 4.1);
							\draw (1.8, 2.1) -- (3.8, 2.1);
							\draw (-.2, 2.1) .. controls(3, 2.5).. (5.8,  2.1);
						\end{tikzpicture}
					\end{center} 
					\caption{$X$ state acting on $\mathcal{H}^3 \otimes \mathcal{H}^4$.}
					\label{x_state_2}
				\end{figure}
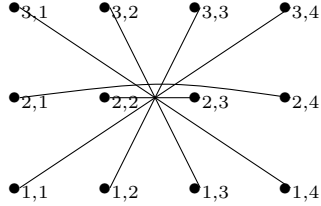
			\end{example}
			
			\begin{theorem}\label{x_state_condition}
				A graph Laplacian $X$ state acting on $\mathcal{H}^{(m)} \otimes \mathcal{H}^{(n)}$ has zero quantum discord if and only if the following conditions are satisfied:
				\begin{enumerate}
					\item
					Any two non-empty subdigraphs of the form $\langle C_{\mu}, C_{\nu} \rangle$ are equal.
					\item 
					Degree of the vertices of $C_\mu$ will fulfil $d_{\mu i} = d_{\mu (n + 1 - i)}$ for $i = 1, 2, \dots n$.
				\end{enumerate}
			\end{theorem}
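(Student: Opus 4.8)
The plan is to specialize Theorem \ref{math_thm} to the $X$-state structure, using the fact recorded in Section 2 that $\rho$ has zero discord if and only if its blocks $\{B_{\mu\nu}\}$ form a family of commuting normal matrices. First I would record the shape of the blocks. By the edge characterization of an $X$ state, every off-diagonal block $B_{\mu\nu}$ (with $\mu\neq\nu$) is a scalar multiple of an anti-diagonal matrix, supported only on the positions $(k,n-k)$; every diagonal block $B_{\mu\mu}$ is diagonal except for the single block $B_{\alpha\alpha}$, which is the sum of the diagonal part $D_\alpha/d$ and the anti-diagonal part $sA_{\alpha\alpha}/d$. The computational engine of the proof is the elementary observation that a product of two anti-diagonal matrices is diagonal, with $(MM')_{ii}=M_{i,n-i}M'_{n-i,i}$, whereas a product of a diagonal and an anti-diagonal matrix is again anti-diagonal. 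This collapses each of the neighbourhood sums appearing in Lemmas \ref{commutativity} and \ref{normality} to a single term.

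Next I would translate the three requirements of Theorem \ref{math_thm}. Applying the normality condition (Lemma \ref{normality}) to an off-diagonal block, both sides reduce to single products of conjugate anti-diagonal weights, so normality of $B_{\mu\nu}$ is equivalent to $|w(v_{\mu k},v_{\nu(n-k)})|=|w(v_{\mu(n-k)},v_{\nu k})|$ for every $k$; the diagonal blocks are automatically normal, being diagonal for $\mu\neq\alpha$ and Hermitian for $\mu=\alpha$. I would then show that, because the graph carries no loops outside the prescribed anti-diagonal pattern, the weighted degree $d_{\mu i}$ equals the sum of the magnitudes of the anti-diagonal edges incident to $v_{\mu i}$, so that the symmetry $d_{\mu i}=d_{\mu(n-i)}$ of condition (2) forces exactly this magnitude balance. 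Thus condition (2) is equivalent to normality of all off-diagonal blocks together with the vanishing of the $[D_\mu,\cdot]$ commutators that enter the degree criterion of Theorem \ref{math_thm}.

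For commutativity I would invoke the remark following Lemma \ref{commutativity}: equal or empty subdigraphs always commute. Hence condition (1), that all non-empty $\langle C_\mu,C_\nu\rangle$ coincide, yields pairwise commutativity of the off-diagonal blocks, while condition (2) kills the remaining diagonal-block commutators $(D_\mu A_{\alpha\beta}-A_{\alpha\beta}D_\mu)$ and $(D_\mu A_{\nu\nu}-A_{\nu\nu}D_\mu)$ in the degree condition. Assembling these gives the ``if'' direction. For ``only if'' I would run the equivalences backwards: normality forces the per-pair magnitude balance, which combined with the loopless degree formula yields $d_{\mu i}=d_{\mu(n-i)}$, and the commutativity of every pair of off-diagonal blocks, read together with the normality constraints, forces the anti-diagonal supports and weights of the non-empty subdigraphs $\langle C_\mu,C_\nu\rangle$ to coincide, giving condition (1).

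I expect the ``only if'' direction to be the main obstacle. Commutativity of two anti-diagonal matrices only equates the ratios $M_{k,n-k}/M_{n-k,k}$ across blocks, not the blocks themselves, so deducing that the subdigraphs are genuinely equal rather than merely proportional requires feeding in the normality magnitude constraints and the degree identities simultaneously. One must also handle the commutator $[A_{\alpha\alpha},A_{\mu\nu}]$ between the unique diagonal subgraph and the off-diagonal blocks, which is not transparently controlled by conditions (1) and (2) alone. Pinning down how these constraints combine to force equality is the delicate point, and I would treat the block $B_{\alpha\alpha}$ with particular care, exploiting the Hermiticity and positive semidefiniteness of the global density matrix to exclude the proportional-but-unequal configurations.
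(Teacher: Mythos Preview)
Your overall strategy coincides with the paper's: both specialize Theorem~\ref{math_thm} to the $X$-state block pattern, invoke the remark after Lemma~\ref{commutativity} that equal or empty subdigraphs automatically commute, appeal to Lemma~\ref{normality} for the off-diagonal blocks, and then read condition~(2) off the degree part of Theorem~\ref{math_thm}. The paper's argument is in fact terser than yours: it simply asserts that Corollary~\ref{commutativity_1} and Lemma~\ref{normality} handle the mixed commutators with $\langle C_\alpha\rangle$ and the normality checks, and then obtains condition~(2) directly from the reduced degree identity $w(v_{\mu i},v_{\nu(n-i)})(d_{\alpha i}-d_{\alpha(n-i)})=0$.

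One place where your route diverges and needs repair: you propose to obtain condition~(2) from \emph{normality}, by arguing that $d_{\mu i}$ equals the sum of anti-diagonal edge magnitudes so that $d_{\mu i}=d_{\mu(n-i)}$ is precisely the balance $|w(v_{\mu i},v_{\nu(n-i)})|=|w(v_{\mu(n-i)},v_{\nu i})|$. That identification fails in general, because in the paper's convention the weighted degree also picks up the loop weight $|w(v_{\mu i},v_{\mu i})|$, and an $X$-state can carry arbitrary (diagonally dominant) loop weights on the diagonal of $\rho$. The paper sidesteps this by not going through normality at all for condition~(2); it comes straight from degree criterion~(3b). Your caution about the ``only if'' half of condition~(1) is, however, well placed: commutativity of two anti-diagonal blocks fixes only the ratios $b_k/b_{n-k}$, not the blocks themselves, and the paper's proof does not spell out how genuine equality of the non-empty $\langle C_\mu,C_\nu\rangle$ is forced---it treats the forward implication and leaves the converse essentially implicit. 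On that point you are being more careful than the source.
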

			
			\begin{proof}
				Recall that if two subdigraphs $\langle C_{\mu}, C_{\nu} \rangle$ and $\langle C_{\alpha}, C_{\beta} \rangle$ are equal, then the commutativity condition is satisfied. Also, if any one of them is empty, the commutativity condition is again satisfied. Now we consider the subgraphs $\langle C_{\mu}, C_{\nu} \rangle$ and $\langle C_{\alpha} \rangle$. When any one of them is an empty graph the commutativity condition is satisfied trivially. There is only one non-empty subgraph of the form $\langle C_{\alpha} \rangle$. Using corollary \ref{commutativity_1} we may verify that the non-empty graphs $\langle C_{\mu}, C_{\nu} \rangle$ and $\langle C_{\alpha} \rangle$ are commutative. Also using lemma \ref{normality} we can show that subgraphs $\langle C_{\mu}, C_{\nu} \rangle$ and $\langle C_{\alpha} \rangle$ satisfy the conditions for being normal. Last, we shall check the degree condition, 
				\begin{equation}
				w(v_{\mu i}, v_{\nu j}) (d_{\alpha i} - d_{\alpha j}) = 0.
				\end{equation}
				As $\langle C_{\mu}, C_{\nu} \rangle$ is non-empty, we have $w(v_{\mu i}, v_{\mu (n + 1 - i)}) \neq 0$ for some $i = 1, 2, \dots n$. For those specific values of $i$ we have,
				\begin{equation}
				w(v_{\mu i}, v_{\nu (n + 1 - i)}) (d_{\alpha i} - d_{\alpha (n + 1 - i)}) = 0.
				\end{equation}
				As $w(v_{\mu i}, v_{\nu (n + 1 - i)}) \neq 0$, we have $d_{\alpha i} = d_{\alpha (n + 1 - i)}$ for $i = 1, 2, \dots n$.
			\end{proof}
			
			\begin{example}
				Consider the following graph:
				$$\xymatrix{\bullet_{11} \ar@{-}[drr] & \bullet_{12} \ar@{-}[d] & \bullet_{13} \ar@{-}[dll] \\ \bullet_{21} & \bullet_{22} & \bullet_{23}}$$
				Edge weight $w(v_{11}, v_{13}) = 2$ and for the other two edges, weight is $1$. Here number of clusters $m = 2$ and number of vertices in each cluster $n = 3$. The corresponding quantum state is given by the density matrix,
				\begin{equation}
				\rho(G) = \frac{1}{8}\begin{bmatrix} 2 & 0 & 0 & 0 & 0 & 2 \\ 0 & 1 & 0 & 0 & 1 & 0 \\ 0 & 0 & 1 & 1 & 0 & 0 \\ 0 & 0 & 1 & 1 & 0 & 0\\ 0 & 1 & 0 & 0 & 1 & 0 \\ 2 & 0 & 0 & 0 & 0 & 2 \end{bmatrix}, 
				\end{equation}
				which lies in $\mathcal{H}^{(2)} \otimes \mathcal{H}^{(3)}$. Degree of the vertices are: $d(v_{11}) = 2, d(v_{12}) = 1, d(v_{13}) = 1, d(v_{11}) = 1, d(v_{22}) = 1, d(v_{23}) = 2$. According to the second condition of the above theorem, for zero discord $d(v_{11}) = d(v_{13})$ which is not fulfilled in this case. Hence, the corresponding quantum state has non-zero discord.
			\end{example}
			
			In general the density matrix of a two-qubit $X$ state is given by,
			\begin{equation}\label{two_qubbit_x_state}
			\rho = \begin{bmatrix} \rho_{11} & 0 & 0 & \rho_{14} \\ 0 & \rho_{22} & \rho_{23} & 0 \\ 0 & \rho_{32} & \rho_{33} & 0 \\ \rho_{41} & 0 & 0 & \rho_{44} \end{bmatrix}.
			\end{equation}
			It must be a Hermitian, positive semidefinite, trace one matrix. To satisfy Hermiticity, $\rho_{41} = \overline{\rho_{14}}, \rho_{32} = \overline{\rho_{23}}$ and $\rho_{ii}$ are real for all $i$. The positivity condition requires that $\rho_{22}\rho_{33} \geq |\rho_{23}|^2$ and $\rho_{11}\rho_{44} \geq |\rho_{14}|^2$. Also, for unit trace $\sum_{i = 1}^4 \rho_{ii} = 1$. Lemma \ref{graphical_states} implies that $\rho$ represents a graph Laplacian state if and only if 
			\begin{equation}\label{x_state_graphical_conditions}
			\rho_{11} \geq |\rho_{14}|, \rho_{22} \geq |\rho_{23}|, \rho_{33} \geq |\rho_{32}| ~\text{and}~ \rho_{44} \geq |\rho_{41}|.
			\end{equation} 
			A graph with four vertices distributed into two clusters, each containing two vertices represent $\rho$ as a graph Laplacian state. Discord of the state depends on the edge distribution in the graph. For simplicity, let the graph have no loops. Then the equation (\ref{x_state_graphical_conditions}) simplifies to
			\begin{equation}
			\rho_{11} = |\rho_{14}|, \rho_{22} = |\rho_{23}|, \rho_{33} = |\rho_{32}| ~\text{and}~ \rho_{44} = |\rho_{41}|.
			\end{equation}
			Combining this with the positivity conditions we get,
			\begin{equation}
			\begin{split}
			& \rho_{11} = |\rho_{14}| = |\rho_{41}| = \rho_{44} = a \\
			& \rho_{22} = |\rho_{23}| = |\rho_{32}| = \rho_{33} = b
			\end{split}
			\end{equation}
			for some real numbers $a$ and $b$. A graph satisfying the above condition is
			$$\xymatrix{\bullet_{11} \ar@{-}[rd] & \bullet_{12} \ar@{-}[ld] \\ \bullet_{21} & \bullet_{22}}$$
			Here, weights of $(v_{11}, v_{22})$ and $(v_{12}, v_{21})$ are $a$ and $b$, respectively. Degree of the vertices are given by $d(v_{11}) = a, d(v_{12}) = b, d(v_{21}) = b$ and $d(v_{22}) = a$. Now by theorem \ref{x_state_condition}, the corresponding quantum state has zero discord if and only if $a = b$. In all other cases $\rho$ has non-zero discord. Further, we know that a two qubit $X$-state is entangled if and only if either $\rho_{22} \rho_{33} < |\rho_{14}|^2$ or $\rho_{11} \rho_{44} < |\rho_{23}|^2$. From this we can conclude that if $a = b$ then entanglement is also zero. This coincides with the results in \cite{ali2010quantum}.
			
			As an important example of the above considered general two qubit $X$ state, we take up the two qubit Werner state, given by
			\begin{equation}
			\rho = a\ket{\psi^-}\bra{\psi^-} + \frac{1 - a}{4}I,
			\end{equation}
			where $\ket{\psi^-} = \frac{1}{\sqrt{2}}(\ket{01} - \ket{10})$ and $0 \leq a \leq 1$. The density matrix in expanded from is,
			\begin{equation}
			\rho = \begin{bmatrix} \frac{1 - a}{4} & 0 & 0 & 0 \\ 0 & \frac{1 + a}{4} & \frac{-a}{2} & 0 \\ 0 & \frac{-a}{2} & \frac{1 + a}{4} & 0 \\ 0 & 0 & 0 & \frac{1-a}{4} \end{bmatrix}.
			\end{equation}
			As $a \leq 1$, clearly $\frac{1 - a}{4} \geq 0$ and $\frac{1 + a}{4} \geq \frac{a}{2}$. Therefore, $\rho$ represents a graph Laplacian quantum state for all values of $a$. The graph representing a two qubit Werner state is
			$$\xymatrix{\bullet_{11} \ar@(ul, dl) & \bullet_{12} \ar@(ur, dr) \ar@{-}[dl] \\ \bullet_{21} \ar@(ul, dl) & \bullet_{22} \ar@(ur, dr) }$$
			including loop weight $\frac{1 - a}{8}$, and edge weight $\frac{a}{2}$. Therefore, degree of the vertices are $d(v_{11}) = \frac{1 - a}{8}, d(v_{12}) = \frac{1 - 3a}{8}, d(v_{21}) = \frac{1 - 3a}{8}$ and $d(v_{22}) = \frac{1 - a}{8}$. For zero discord, we need $\frac{1 - a}{8} = \frac{1 - 3a}{8}$, which implies that $a = 0$. This is consistent with the results in \cite{ali2010quantum}.

			\subsection{Graph Laplacian quantum states corresponding to simple graphs}
			
			A simple graph $G$ satisfies the basis assumptions stated in section 2. Given any edge $(i,j)$ of a simple graph, the edge weight $w(i,j) = w(j,i) = 1$. Also, a simple graph has no loop, that is, $(i,i) \notin E(G)$ for all vertices $i$. These assumptions simplify the conditions of theorem 1. A detailed description on quantum discord of graph Laplacian quantum states arising from simple graphs is available in \cite{dutta2017quantum}. Here we present a specific example. Consider the following graph:
			$$\xymatrix{\bullet_{00} \ar@{-}[d] \ar@{-}[drr] \ar@{-}[drrr] & \bullet_{01} \ar@{-}[d] \ar@{-}[dr] \ar@{-}[drr] & \bullet_{02} \ar@{-}[d] \ar@{-}[dl] \ar@{-}[dll] & \bullet_{03} \ar@{-}[d] \ar@{-}[dll] \ar@{-}[dlll] \\ \bullet_{10} & \bullet_{11} & \bullet_{12} & \bullet_{13}}$$
			
			It has two clusters $C_0$ and $C_1$, each containing $4$ vertices. Note that, there is only one bipartite subgraph $\langle C_0, C_1 \rangle$ in the above graph. Also, degree of every vertex is three. The density matrices corresponding to this graph are:
			\begin{equation}
			\begin{split} 
			& \rho_l(G) = \frac{1}{24}\begin{bmatrix} 3 & 0 & 0 & 0 & -1 & 0 & -1 & -1 \\ 0 & 3 & 0 & 0 & 0 & -1 & -1 & -1 \\ 0 & 0 & 3 & 0 & -1 & -1 & -1 & 0 \\ 0 & 0 & 0 & 3 & -1 & -1 & 0 & -1 \\ -1 & 0 & -1 & -1 & 3 & 0 & 0 & 0 \\ 0 & -1 & -1 & -1 & 0 & 3 & 0 & 0 \\ -1 & -1 & -1 & 0 & 0 & 0 & 3 & 0 \\ -1 & -1 & 0 & -1 & 0 & 0 & 0 & 3 \end{bmatrix},\\
			\text{and}~ & \rho_q(G) = \frac{1}{24}\begin{bmatrix} 3 & 0 & 0 & 0 & 1 & 0 & 1 & 1 \\ 0 & 3 & 0 & 0 & 0 & 1 & 1 & 1 \\ 0 & 0 & 3 & 0 & 1 & 1 & 1 & 0 \\ 0 & 0 & 0 & 3 & 1 & 1 & 0 & 1 \\ 1 & 0 & 1 & 1 & 3 & 0 & 0 & 0 \\ 0 & 1 & 1 & 1 & 0 & 3 & 0 & 0 \\ 1 & 1 & 1 & 0 & 0 & 0 & 3 & 0 \\ 1 & 1 & 0 & 1 & 0 & 0 & 0 & 3 \end{bmatrix}.
			\end{split} 
			\end{equation}
			This graph satisfies all the conditions of theorem 1. Therefore, the mixed quantum states corresponding to this graph have zero discord.

	\section{Conclusions}
	
		This work extends the study of quantum discord of graph Laplacian states arising from simple graphs to that of graph Laplacian states arising from weighted digraphs. This covers a wider set of quantum states, including mixed states, represented by graphs. We establish that a quantum state is a graph Laplacian state if and only if its density matrix is diagonally dominant. We study the nature of discord in a number of well known quantum states, for example, two qubit graph Laplacian states including Bell states, Werner, Isotropic, and $X$ states. It emerges that the nature of quantum discord can be visualized graphically. All Werner and isotropic states are seen to have nonzero quantum discord, except for certain discrete values of their parameters. Also, an $X$ state has zero discord if and only if the underlined graph follows a particular degree sequence which has been used for analysing discord of two qubit $X$-states. The following problems may be attempted in future:
		
		\begin{enumerate}
			\item 
				We have shown that there are Isotropic states that are not graph Laplacian states and hence it would be worthwhile to develop graph theoretic representation of these states.
			\item
				Very recently a discord based quantum cryptography has been discussed \cite{pirandola2014quantum}. Also, there is a lot of interest in applying mixed quantum states to different tasks in quantum information \cite{rosmanis2006mixed}. These two possibilities open up the scope to design quantum cryptographic protocols based on graph Laplacian quantum states.
			\item
				In Nuclear magnetic resonance (NMR) based quantum computation use is made of pseudo-pure quantum states \cite{cory1997ensemble}. These are mixed states. Graph Laplacian states are also mixed in general. Constructing pseudo-pure states with Graph Laplacian states will be a worthwhile task in this direction.
			\item 
				The idea of quantum discord is further generalized when the von-Neumann entropy is replaced by Sharma-Mittal, R{\'e}nyi, and Tsallis entropy \cite{mazumdar2018sharma}. Constructing a combinatorial aspect of these new discords will also be a fascinating task.
		\end{enumerate}
		
		This work, which relies on the interface between graph theory and quantum mechanics would be useful for investigation of discord of a bigger class of quantum states with a corresponding pictorial description.

	\section*{Appendix}
		
		Proof of Lemma \ref{commutativity}
		\begin{proof} 
			Commutativity $AB = BA$ holds if and only if $(AB)_{ij} = (BA)_{ij}$ for all $i, j$ with $1 \le i,j \le n$. Note that, $a_{i k} = w(v_{\mu i}, v_{\nu k})$ and $b_{k j} = w(v_{\alpha k}, v_{\beta j})$. Now applying equation (\ref{inner_product}) we get,
			\begin{equation}\label{commutativity_main}
			\begin{split} 
			(AB)_{ij} & = \sum_{k = 1}^n a_{ik}b_{kj} = \braket{a_{i*}| b_{*j}} = \sum_k w(v_{\mu i}, v_{\nu k}) w(v_{\alpha k}, v_{\beta j}) : k \in \nbd(v_{\mu i}) \cap \nbd(v_{\beta j}),\\
			(BA)_{ij} & = \sum_{k = 1}^n b_{ik}a_{kj} = \braket{b_{i*}| a_{*j}} = \sum_k w(v_{\alpha i}, v_{\beta k}) w(v_{\mu k}, v_{\nu j}) : k \in \nbd(v_{\alpha i}) \cap \nbd(v_{\nu j}).
			\end{split}
			\end{equation}
		\end{proof}
	
		Proof of corollary \ref{commutativity_1}		
		\begin{proof}
			We have already justified that, $\supp(a_{i*}) = \nbd_{\text{\~{A}}}(v_{\mu i})$ and $\supp(a_{*i}) = \nbd_{\text{\~{A}}}(v_{\mu i})$, for all $i = 1, 2, \dots n$. The matrix $A$ commutes with $B$, if and only if the product $\langle a_{i*}, b_{*j} \rangle = \langle b_{i*}, a_{*j} \rangle$ for all $i$, and $j$. Applying the symmetry of $A$, we get, $\langle a_{i*}, b_{*j} \rangle = \langle a_{j*}, b_{i*} \rangle$. Using the graph theoretic convention, we get the desired result.
		\end{proof}
	
		Proof of corollary \ref{commutativity_2} 		
		\begin{proof}
			The proof follows from the above Corollary by choosing $\alpha = \beta = \nu$.
		\end{proof}
		
		Proof of corollary \ref{normality} 

		\begin{proof}
			Let $B = (b_{ij})_{n \times n} = (a_{ji})_{n \times n} = A^\dagger$. Clearly, $b_{i*} = a_{*i}^\dagger$ and $b_{*i} = a_{i*}^\dagger$ for all $i$. Note that,
			\begin{equation}
			\begin{split} 
			(AA^\dagger)_{ij} & = \sum_{k = 1}^n a_{ik}b_{kj} = \braket{a_{i*}| b_{*j}} = \braket{a_{i*}| a_{j*}} \\
			& = \sum_k w(v_{\mu i}, v_{\nu k}) w(v_{\nu k}, v_{\mu j}) : k \in \nbd(v_{\mu i}) \cap \nbd(v_{\mu j}).
			\end{split} 
			\end{equation} 
			Similarly, $(A^\dagger A)_{ij} = \sum_k w(v_{\nu i}, v_{\mu k}) w(v_{\mu k}, v_{\nu j}) : k \in \nbd(v_{\nu i}) \cap \nbd(v_{\nu j})$. Hence, we get the equality as stated for normality.
		\end{proof}
		
		Proof of theorem \ref{math_thm} 
		
		\begin{proof}
			The commutativity and normality conditions follow from the lemma \ref{commutativity} and \ref{normality} for all non-diagonal blocks. Note that, diagonal blocks are adjacency matrices of $\langle C_\mu \rangle$ which are Hermitian, hence normal. The degree condition includes all diagonal blocks in this family.
			
			First we consider commutativity of two diagonal blocks,
			\begin{equation}
			\begin{split} 
			& \frac{1}{d}(D_\mu \pm A_{\mu \mu}) \frac{1}{d}(D_\nu \pm A_{\nu \nu}) = \frac{1}{d}(D_\nu \pm A_{\nu \nu}) \frac{1}{d}(D_\mu \pm A_{\mu \mu})\\
			\Rightarrow & D_\mu D_\nu \pm D_\mu A_{\nu \nu} \pm A_{\mu \mu}D_\nu + A_{\mu \mu}A_{\nu \nu} = D_\nu D_\mu \pm D_\nu A_{\mu \mu} \pm A_{\nu \nu}D_\mu + A_{\nu \nu}A_{\mu \mu}\\
			\Rightarrow & (A_{\mu \mu}A_{\nu \nu} - A_{\nu \nu}A_{\mu \mu}) \pm (D_\mu A_{\nu \nu} - A_{\nu \nu}D_\mu) \pm (A_{\mu \mu}D_\nu - D_\nu A_{\mu \mu}) = 0\\
			\Rightarrow & (A_{\mu \mu}A_{\nu \nu} - A_{\nu \nu}A_{\mu \mu})_{ij} \pm (D_\mu A_{\nu \nu} - A_{\nu \nu}D_\mu)_{ij} \pm (A_{\mu \mu}D_\nu - D_\nu A_{\mu \mu})_{ij} = 0.
			\end{split}
			\end{equation}
			In terms of graphical parameters we may write,
			\begin{equation}
			\begin{split}
			(D_\mu A_{\nu \nu} - A_{\nu \nu}D_\mu)_{ij} & = d_{\mu i}(A_{\nu \nu})_{ij} - (A_{\nu \nu})_{ij} d_{\mu j} = w(v_{\nu i},v_{\nu j})(d_{\mu i} - d_{\mu j}),
			\end{split}
			\end{equation}
			\begin{equation}
			\begin{split}
			(A_{\mu \mu}D_\nu - D_\nu A_{\mu \mu})_{ij} & = (A_{\mu \mu})_{ij}d_{\nu j} - d_{\nu i}(A_{\nu \nu})_{ij} = w(v_{\mu i}, v_{\mu j})(d_{\nu j} - d_{\nu i}).
			\end{split} 
			\end{equation}
			Also from the corollary \ref{commutativity_2}, 
			\begin{equation}
			\begin{split} 
			(A_{\mu \mu}A_{\nu \nu} - A_{\nu \nu}A_{\mu \mu})_{ij} & = \sum_{k \in \nbd(v_{\mu i}) \cap \nbd(v_{\nu j})} w(v_{\mu i}, v_{\mu k}) w(v_{\nu k}, v_{\nu j}) \\
			& - \sum_{k \in \nbd(v_{\nu i}) \cap \nbd(v_{\mu j})} w(v_{\nu i}, v_{\nu k}) w(v_{\mu k}, v_{\mu j}).
			\end{split} 
			\end{equation}
			Thus for commutativity of diagonal blocks the following degree condition need to be satisfied,
			\begin{equation}
			\begin{split} 
			\sum_{k \in \nbd(v_{\mu i}) \cap \nbd(v_{\nu j})} & w(v_{\mu i}, v_{\mu k}) w(v_{\nu k}, v_{\nu j}) - \sum_{k \in \nbd(v_{\nu i}) \cap \nbd(v_{\mu j})} w(v_{\nu i}, v_{\nu k}) w(v_{\mu k}, v_{\mu j}) \\ 
			& \pm \big[w(v_{\nu i},v_{\nu j})(d_{\mu i} - d_{\mu j}) + w(v_{\mu i}, v_{\mu j})(d_{\nu j} - d_{\nu i}) \big] = 0.
			\end{split} 
			\end{equation}
			We consider $+$ for $\rho_q(G)$ and $-$ for $\rho_l(G)$ in the above equation.
			
			Now we consider commutativity of a diagonal and a non-diagonal block.
			\begin{equation}
			\begin{split} 
			& \frac{1}{d}(D_{\mu} \pm A_{\mu \mu})\frac{\pm 1}{d}A_{\alpha \beta} = \frac{\pm 1}{d}A_{\alpha \beta} \frac{1}{d}(D_{\mu} \pm A_{\mu \mu})\\
			\Rightarrow & D_{\mu}A_{\alpha \beta} \pm A_{\mu \mu}A_{\alpha \beta} = A_{\alpha \beta}D_{\mu} \pm A_{\alpha \beta}A_{\mu \mu}.
			\end{split} 
			\end{equation}
			Rearranging the terms we get the equation,
			\begin{equation}
			(D_{\mu}A_{\alpha \beta} - A_{\alpha \beta}D_{\mu}) \pm (A_{\mu \mu}A_{\alpha \beta} - A_{\alpha \beta}A_{\mu \mu}) =  0.
			\end{equation}
			The above equation holds if for all $i,j$ with $1 \le i,j \le n$,
			\begin{equation}
			\begin{split} 
			(D_{\mu}A_{\alpha \beta})_{ij} & - (A_{\alpha \beta}D_{\mu})_{ij} \pm \{ (A_{\mu \mu}A_{\alpha \beta})_{ij} - (A_{\alpha \beta}A_{\mu \mu})_{ij} \} =  0\\
			\Rightarrow d_{\mu i}(A_{\alpha \beta})_{ij} & - (A_{\alpha \beta})_{ij}d_{\mu j} \pm \{ (A_{\mu \mu}A_{\alpha \beta})_{ij} - (A_{\alpha \beta}A_{\mu \mu})_{ij} \} =  0.
			\end{split}
			\end{equation}
			Graph theoretic counterpart of $(A_{\mu \mu}A_{\alpha \beta} - A_{\alpha \beta}A_{\mu \mu})$ follows from the corollary \ref{commutativity_1}. Thus,
			\begin{equation}
			\begin{split} 
			(A_{\mu \mu}A_{\alpha \beta})_{ij} - (A_{\alpha \beta}A_{\mu \mu})_{ij} & = \sum_{k \in \nbd(v_{\mu i}) \cap \nbd(v_{\beta j})} w(v_{\mu i}, v_{\mu k}) w(v_{\alpha k}, v_{\beta j})\\ 
			& - \sum_{k \in \nbd(v_{\alpha i}) \cap \nbd(v_{\mu j})} w(v_{\alpha i}, v_{\beta k}) w(v_{\mu k}, v_{\mu j}).
			\end{split} 
			\end{equation} 
			Also,
			\begin{equation}
			d_{\mu i}(A_{\alpha \beta})_{ij} - (A_{\alpha \beta})_{ij}d_{\mu j} = w(v_{\alpha i}, v_{\beta j}) (d_{\mu i} - d_{\mu j}).
			\end{equation}
			Combining the above two equations we get,
			\begin{equation}
			\begin{split} 
			w(v_{\alpha i}, v_{\beta j}) (d_{\mu i} - d_{\mu j}) & \pm \big[\sum_{k \in \nbd(v_{\mu i}) \cap \nbd(v_{\beta j})} w(v_{\mu i}, v_{\mu k}) w(v_{\alpha k}, v_{\beta j}) \\
			& - \sum_{k \in \nbd(v_{\alpha i}) \cap \nbd(v_{\mu j})} w(v_{\alpha i}, v_{\beta k}) w(v_{\mu k}, v_{\mu j}) \big] = 0.
			\end{split} 
			\end{equation}
			
			The density matrix $\rho(G)$ represents a zero discord quantum state if its blocks $B_{\mu \nu}$ form a family of commuting normal matrices \cite{huang2011new}. In the graph theoretic context, $\rho(G)$ meets the criterion provided the above three conditions are satisfied.
		\end{proof}

	\section*{Acknowledgement}
		This work was partially supported by the project \textit{“Graph theoretical aspects in quantum information processing”} [Grant No. 25(0210)/13/EMR-II] funded by Council of Scientific and Industrial Research, New Delhi. S.D. is grateful to the Ministry of Human Resource Development, Government of India, for a doctoral fellowship. This work may be a part of his doctoral thesis.


\begin{thebibliography}{10}
	
	\bibitem{braunstein2006laplacian}
	Samuel~L Braunstein, Sibasish Ghosh, and Simone Severini.
	\newblock The laplacian of a graph as a density matrix: a basic combinatorial
	approach to separability of mixed states.
	\newblock {\em Annals of Combinatorics}, 10(3):291--317, 2006.
	
	\bibitem{hassan2007combinatorial}
	Ali Hassan, M~Saif, and Pramod~S Joag.
	\newblock A combinatorial approach to multipartite quantum systems: basic
	formulation.
	\newblock {\em Journal of Physics A: Mathematical and Theoretical},
	40(33):10251, 2007.
	
	\bibitem{dutta2016graph}
	Supriyo Dutta, Bibhas Adhikari, and Subhashish Banerjee.
	\newblock A graph theoretical approach to states and unitary operations.
	\newblock {\em Quantum Information Processing}, 15(5):2193--2212, 2016.
	
	\bibitem{lockhart2016combinatorial}
	Joshua Lockhart and Simone Severini.
	\newblock Combinatorial entanglement.
	\newblock {\em arXiv preprint arXiv:1605.03564}, 2016.
	
	\bibitem{belhaj2016weighted}
	Abdelilah Belhaj, Adil Belhaj, Larbi Machkouri, Moulay~Brahim Sedra, and Soumia
	Ziti.
	\newblock Weighted graph theory representation of quantum information inspired
	by lie algebras.
	\newblock {\em arXiv preprint arXiv:1609.03534}, 2016.
	
	\bibitem{simmons2017symmetric}
	David~E Simmons, Justin~P Coon, and Animesh Datta.
	\newblock Symmetric laplacians, quantum density matrices and their von-neumann
	entropy.
	\newblock {\em Linear Algebra and its Applications}, 532:534--549, 2017.
	
	\bibitem{simmons2017quantum}
	David Simmons, Justin Coon, and Animesh Datta.
	\newblock The quantum theil index: Characterizing graph centralization using
	von neumann entropy.
	\newblock {\em arXiv preprint arXiv:1707.07906}, 2017.
	
	\bibitem{belhaj2016multi}
	Adil Belhaj.
	\newblock Multi-qubits and polyvalent singularity in type ii supestring theory.
	\newblock {\em arXiv preprint arXiv:1612.09356}, 2016.
	
	\bibitem{wu2010graphs}
	Chai~Wah Wu.
	\newblock On graphs whose laplacian matrix’s multipartite separability is
	invariant under graph isomorphism.
	\newblock {\em Discrete Mathematics}, 310(21):2811--2814, 2010.
	
	\bibitem{dutta2017quantum}
	Supriyo Dutta, Bibhas Adhikari, and Subhashish Banerjee.
	\newblock Quantum discord of states arising from graphs.
	\newblock {\em Quantum Information Processing}, 16(8):183, 2017.
	\newblock arXiv:1702.06360.
	
	\bibitem{adhikari2017Laplacian}
	Bibhas Adhikari, Subhashish Banerjee, Satyabrata Adhikari, and Atul Kumar.
	\newblock Laplacian matrices of weighted digraphs represented as quantum
	states.
	\newblock {\em Quantum information processing}, 16(3):79, 2017.
	\newblock arXiv: 1205.2747.
	
	\bibitem{zurek2000einselection}
	Wojciech~H Zurek.
	\newblock Einselection and decoherence from an information theory perspective.
	\newblock {\em arXiv preprint quant-ph/0011039}, 2000.
	
	\bibitem{ollivier2001quantum}
	Harold Ollivier and Wojciech~H Zurek.
	\newblock Quantum discord: a measure of the quantumness of correlations.
	\newblock {\em Physical review letters}, 88(1):017901, 2001.
	
	\bibitem{henderson2001classical}
	Leah Henderson and Vlatko Vedral.
	\newblock Classical, quantum and total correlations.
	\newblock {\em Journal of physics A: mathematical and general}, 34(35):6899,
	2001.
	
	\bibitem{adhikari2012operational}
	Satyabrata Adhikari and Subhashish Banerjee.
	\newblock Operational meaning of discord in terms of teleportation fidelity.
	\newblock {\em Physical Review A}, 86(6):062313, 2012.
	
	\bibitem{pirandola2013quantum}
	Stefano Pirandola.
	\newblock Quantum discord as a resource for quantum cryptography.
	\newblock {\em arXiv preprint arXiv:1309.2446}, 2013.
	
	\bibitem{brodutch2016should}
	Aharon Brodutch and Daniel~R Terno.
	\newblock Why should we care about quantum discord?
	\newblock {\em arXiv preprint arXiv:1608.01920}, 2016.
	
	\bibitem{huang2014computing}
	Yichen Huang.
	\newblock Computing quantum discord is np-complete.
	\newblock {\em New journal of physics}, 16(3):033027, 2014.
	
	\bibitem{lim2014sudden}
	Hyungjun Lim and Robert Joynt.
	\newblock Sudden decoherence transitions for quantum discord.
	\newblock {\em Journal of Physics A: Mathematical and Theoretical},
	47(13):135305, 2014.
	
	\bibitem{dakic2010necessary}
	Borivoje Daki{\'c}, Vlatko Vedral, and {\v{C}}aslav Brukner.
	\newblock Necessary and sufficient condition for nonzero quantum discord.
	\newblock {\em Physical review letters}, 105(19):190502, 2010.
	
	\bibitem{wu2006conditions}
	Chai~Wah Wu.
	\newblock Conditions for separability in generalized laplacian matrices and
	diagonally dominant matrices as density matrices.
	\newblock {\em Physics Letters A}, 351(1-2):18--22, 2006.
	
	\bibitem{dutta2016bipartite}
	Supriyo Dutta, Bibhas Adhikari, Subhashish Banerjee, and R~Srikanth.
	\newblock Bipartite separability and nonlocal quantum operations on graphs.
	\newblock {\em Physical Review A}, 94(1):012306, 2016.
	
	\bibitem{west2001introduction}
	Douglas~Brent West.
	\newblock {\em Introduction to graph theory}, volume~2.
	\newblock Prentice hall Upper Saddle River, 2001.
	
	\bibitem{kus2009classical}
	Marek Ku{\'s} and Ingemar Bengtsson.
	\newblock “classical” quantum states.
	\newblock {\em Physical Review A}, 80(2):022319, 2009.
	
	\bibitem{werner1989quantum}
	Reinhard~F Werner.
	\newblock Quantum states with einstein-podolsky-rosen correlations admitting a
	hidden-variable model.
	\newblock {\em Physical Review A}, 40(8):4277, 1989.
	
	\bibitem{luo2008using}
	Shunlong Luo.
	\newblock Using measurement-induced disturbance to characterize correlations as
	classical or quantum.
	\newblock {\em Physical Review A}, 77(2):022301, 2008.
	
	\bibitem{luo2010geometric}
	Shunlong Luo and Shuangshuang Fu.
	\newblock Geometric measure of quantum discord.
	\newblock {\em Physical Review A}, 82(3):034302, 2010.
	
	\bibitem{guo2016non}
	Yu~Guo.
	\newblock Non-commutativity measure of quantum discord.
	\newblock {\em Scientific reports}, 6, 2016.
	
	\bibitem{ali2010quantum}
	Mazhar Ali, ARP Rau, and Gernot Alber.
	\newblock Quantum discord for two-qubit x states.
	\newblock {\em Physical Review A}, 81(4):042105, 2010.
	
	\bibitem{sabapathy2013quantum}
	Krishna~Kumar Sabapathy and R~Simon.
	\newblock Quantum discord for two-qubit $ x $-states: A comprehensive approach
	inspired by classical polarization optics.
	\newblock {\em arXiv preprint arXiv:1311.0210}, 2013.
	
	\bibitem{pirandola2014quantum}
	Stefano Pirandola.
	\newblock Quantum discord as a resource for quantum cryptography.
	\newblock {\em Scientific reports}, 4:6956, 2014.
	
	\bibitem{rosmanis2006mixed}
	Ansis Rosmanis and Ilze Dzelme-Berzina.
	\newblock Mixed states in quantum cryptography.
	\newblock In {\em FCS}, pages 214--218, 2006.
	
	\bibitem{cory1997ensemble}
	David~G Cory, Amr~F Fahmy, and Timothy~F Havel.
	\newblock Ensemble quantum computing by nmr spectroscopy.
	\newblock {\em Proceedings of the National Academy of Sciences},
	94(5):1634--1639, 1997.
	
	\bibitem{mazumdar2018sharma}
	Souma Mazumdar, Supriyo Dutta, and Partha Guha.
	\newblock Sharma-mittal quantum discord.
	\newblock {\em arXiv preprint arXiv:1812.00752}, 2018.
	
	\bibitem{huang2011new}
	Jie-Hui Huang, Lei Wang, and Shi-Yao Zhu.
	\newblock A new criterion for zero quantum discord.
	\newblock {\em New Journal of Physics}, 13(6):063045, 2011.
	
\end{thebibliography}

\end{document}